\documentclass[onefignum,onetabnum]{siamonline171218}

\usepackage{lipsum}
\usepackage{amsfonts}
\usepackage{graphicx}
\usepackage{epstopdf}
\ifpdf
  \DeclareGraphicsExtensions{.eps,.pdf,.png,.jpg}
\else
  \DeclareGraphicsExtensions{.eps}
\fi

\newsiamremark{remark}{Remark}
\newsiamremark{hypothesis}{Hypothesis}
\crefname{hypothesis}{Hypothesis}{Hypotheses}
\newsiamthm{claim}{Claim}

\headers{Beltrami Optimization for Spherical Bijection}{Z.H.Xu, L.M.Lui}

\title{Two-chart Beltrami Optimization for Distortion-Controlled Spherical Bijection with Application to Brain Surface Registration\thanks{Submitted to the editors DATE.
\funding{The second author was supported by HKRGC GRF (Project ID: 14307621).
}}}
\author{Zhehao Xu\thanks{Department of Mathematics, The Chinese University of Hong Kong
  (\email{zhxu@math.cuhk.edu.hk}
  ).}
\and Lok Ming Lui\thanks{Department of Mathematics, The Chinese University of Hong Kong
  (\email{lmlui@math.cuhk.edu.hk}
  ).}
}

\usepackage{amsopn}

\makeatletter
\newcommand*{\addFileDependency}[1]{
  \typeout{(#1)}
  \@addtofilelist{#1}
  \IfFileExists{#1}{}{\typeout{No file #1.}}
}
\makeatother



\newcommand*{\myexternaldocument}[1]{%
    \externaldocument{#1}%
    \addFileDependency{#1.tex}%
    \addFileDependency{#1.aux}%
}

\usepackage{booktabs}
\usepackage{lipsum}
\usepackage{amsfonts}
\usepackage{graphicx}
\usepackage{epstopdf}
\usepackage{enumerate}
\usepackage{enumitem}
\usepackage{cite}
\usepackage{subcaption}

\usepackage{tabularx}
\usepackage{arydshln}
\usepackage{multirow}
\usepackage{caption}
\usepackage{microtype}

\usepackage{algorithm}
\usepackage{algpseudocode}
\usepackage{amsmath}
\allowdisplaybreaks[1]

\usepackage{mathtools}

\usepackage{svg}

\ifpdf
\hypersetup{
  pdftitle={Spherical Beltrami Differential},
  pdfauthor={Z.H. Xu, and L.M. Lui}
}
\fi

\myexternaldocument{ex_supplement}

\begin{document}

\maketitle

\begin{abstract}
Many genus-0 surface mapping tasks such as landmark alignment, feature matching, and image-driven registration, can be reduced (via an initial spherical conformal map) to optimizing a spherical self-homeomorphism with controlled distortion. However, existing works lack efficient mechanisms to control the geometric distortion of the resulting mapping. To resolve this issue, we formulate this as a Beltrami-space optimization problem, where the angle distortion is encoded explicitly by the Beltrami differential and bijectivity can be enforced through the constraint $\|\mu\|_{\infty}<1$. To make this practical on the sphere, we introduce the Spherical Beltrami Differential (SBD), a two-chart representation of quasiconformal self-maps of the unit sphere $\mathbb{S}^2$, together with cross-chart consistency conditions that yield a globally bijective spherical deformation (up to conformal automorphisms). Building on the Spectral Beltrami Network, we develop BOOST, a differentiable optimization framework that updates two Beltrami fields to minimize task-driven losses while regularizing distortion and enforcing consistency along the seam. Experiments on large-deformation landmark matching and intensity-based spherical registration demonstrate improved task performance meanwhile maintaining controlled distortion and robust bijective behavior. We also apply the method to cortical surface registration by aligning sulcal landmarks and matching cortical sulcal depth, achieving comparative or better registration performance without sacrificing geometric validity.

\end{abstract}

\begin{keywords}
    Quasiconformal mapping, surface registration, deep learning, surface mesh
\end{keywords}

\begin{AMS}
    65D18, 
    65K10, 
    68T07 
\end{AMS}

\section{Introduction}
Surface registration refers to the process of establishing a meaningful bijective correspondence between two geometric surfaces. This fundamental task serves as a prerequisite for comparing shapes, transferring attributes, and analyzing geometric variations across different objects. Applications can be found in many fields, ranging from computer vision \cite{chen2024three,li2024bi} to medical imaging and computational anatomy \cite{chen2024multiscale,ren2024sugar}. In particular, registration between genus-0 closed surfaces has attracted significant attention in recent years due to its critical role in analyzing biological structures with spherical topology.  For example, brain surface registration enables standardized comparisons between individuals for statistical analysis \cite{lui2007landmark,ren2024sugar,zhao2022fast}. Another related application is the computation of spherical parameterizations with controlled distortion, which facilitate geometry processing and numerical analysis on genus‑0 closed surfaces \cite{gu2004genus,choi2015flash,wang2016bijective,liao2024convergence}. 

The registration can be done by parameterizing the surfaces to a common template space, specifically the spherical surface. The problem of optimizing a mapping from a genus-0 closed surface to another one is then boiled down to finding an optimal sphere self-mapping. This is achieved by computing the respective spherical parameterization of each surface, whose existence is guaranteed by the uniformization theorem. Numerically, there are many established tools \cite{choi2015flash, gu2004genus,jin2008discrete,liao2024convergence} that enable its efficient computation. Therefore, let $\phi_i:\mathcal{M}_i\to \mathbb{S}^2$ be the spherical parameterization of each $\mathcal{M}_i$, then the problem can be formulated mathematically as finding an optimal sphere self-mapping such that
\begin{equation}
     f = \textbf{argmin}_{g : \mathbb{S}^{2} \rightarrow \mathbb{S}^{2}} E(g) \label{eq: formulation of optimization problem} \\ \text{ subject to } g \in \mathcal{C},
\end{equation}
and the composition $\phi_2^{-1}\circ f \circ \phi_1$ is the desired mapping from $\mathcal{M}_1$ to $\mathcal{M}_2$. 
In this formulation, $E$ is the energy functional by which the solution satisfies desired properties, and $\mathcal{C}$ is the constraint space.  In applications, bijectivity is essential for the map. Beyond bijectivity, we require the map to have \emph{well-controlled geometric distortion} so that it is physically plausible. 
Accordingly, the feasible set $\mathcal{C}$ should be understood as bijective maps together with an explicit distortion constraint or regularization (e.g., bounds on local anisotropy), rather than arbitrary smooth bijections.
For example, in the landmark constrained conformal mapping problem\cite{lui2010optimized,lam2014landmark}, to get a bijective and plausible deformation, it is necessary to control the local geometric distortion like angle distortion.

Broadly, existing approaches either optimize vertex positions directly on the sphere (e.g., energy minimization\cite{lai2014folding} or flow-based updates\cite{yeo2009spherical}) or pull the problem back to a planar domain via stereographic projection and optimize in $\mathbb{C}$ \cite{choi2016spherical, choi2015flash, haker2002conformal, liao2024convergence}. While intuitive and widely used, direct optimization \cite{lai2014folding, yeo2009spherical,zhao2022fast, zhao2019spherical} typically leads to a highly nonconvex landscape and does not inherently enforce global bijectivity.
Chart-based formulations \cite{choi2016spherical, choi2015flash, haker2002conformal, liao2024convergence, guo2023automatic} alleviate some numerical difficulty but introduce their own geometric constraints. They usually introduce artificial boundary/pole handling because the sphere cannot be covered by one chart and this can induce significant distortion near the pole. In addition, current approaches, when applied to registration problems, often struggle to simultaneously achieve both bijectivity and low geometric distortion. A more detailed review is given in Section \ref{sec: previous works}.

These limitations motivate our exploration of representing spherical bijection through distortion variables rather than vertex coordinates and developing optimization frameworks on the distortion variables directly so as to control the geometric distortion of the result mapping efficiently. In quasiconformal theory, the Beltrami differential/coefficient provides a compact, intrinsic description of local shear and stretch and there are mature algorithms for reconstructing quasiconformal mappings with Beltrami coefficient given \cite{lui2012optimization,meng2016tempo,lui2013texture}. However, two key obstacles remain: (i) on the sphere, a quasiconformal representation and its computation naturally require a multi-chart viewpoint and a free-boundary mapping algorithm, and (ii) classical solvers are not designed as differentiable primitives for modern gradient-based optimization. In this work, we introduce the spherical Beltrami differential as a two-chart representation with explicit cross-chart consistency, and combine it with a differentiable quasiconformal solver surrogate, the \textbf{S}pectral \textbf{B}eltrami \textbf{N}etwork (\textbf{SBN}), based on least-squares quasiconformal energy (LSQC) \cite{qiu2019computing,xu2025neural} to enable efficient, distortion-controlled optimization for spherical mapping tasks. 
In particular, to solve the bijective spherical parameterization problem, we propose a novel algorithm, \textbf{B}eltrami \textbf{O}ptimization \textbf{O}n \textbf{S}pherical \textbf{T}opology (\textbf{BOOST}). BOOST represents a spherical map using two stereographic charts (hemispheres) and optimizes the corresponding Beltrami fields using the SBN. The spherical Beltrami differential has a correspondence with the spherical quasiconformal map once cross-chart compatibility is satisfied. In practice, BOOST enforces this compatibility through seam matching, folding penalties after stitching, and seam smoothness regularization. This two-chart formulation avoids the pole distortion inherent in single-chart spherical parameterizations and enables flexible, distortion-controlled non-overlap registration on genus-0 surfaces, with applications to cortical surface alignment and landmark/feature or hybrid matching.

The organization of the paper is as follows. 
In Section \ref{sec: previous works}, related works are reviewed, and contributions are highlighted in Section \ref{sec: contributions}.
In Section \ref{sec: math background}, necessary mathematical concepts are introduced. 
In Section \ref{sec: spherical beltrami differential}, we propose the concept of the spherical Beltrami differential and show its correspondence with spherical quasiconformal mappings, and then introduce the model Spectral Beltrami Network.
In Section \ref{sec: BOOST}, we establish the neural-based optimization framework BOOST.
In Section \ref{sec: experiments}, multiple experiments are conducted to show the capability and potential of our work in complex imaging problems, with applications to cortical surface registration.
In Section \ref{Sec: discussion}, we analyze the key factors hidden behind the excellence of the model and briefly discuss its limitations as well as potential future work.
In Section \ref{sec: conclusions}, a conclusion is given.

\section{Previous works}
\label{sec: previous works}

Spherical mapping for genus‑0 surfaces plays an important role in surface registration problems. A common pipeline is to first compute an initial spherical conformal map and then solve a spherical self-mapping problem to meet downstream goals such as distortion reduction, landmark alignment, feature matching, or image-driven registration. In this section, we review related work from three angles: (i) distortion-driven spherical parameterization, (ii) task-driven spherical registration/correspondence, and (iii) distortion control via quasiconformal/Beltrami representations.
\paragraph{Distortion-driven spherical parameterization}
A large body of work focuses on mapping genus--0 surfaces to the sphere with bounded or optimized distortion, primarily as a parameterization or remeshing tool. Early and influential approaches include (i) direct spherical parameterization such as \cite{praun2003spherical,wan2012efficient} based on nonlinear energies, flow based methods \cite{jin2008discrete, kazhdan2012can}, density-equalizing/area-preserving mapping \cite{lyu2024spherical,liu2026spherical,cui2019spherical} and  as-rigid-as-possible (ARAP) spherical parameterization \cite{wang2014rigid} that trades off angle and area distortions. (ii) parameterization domain based approaches like \cite{haker2002conformal, angenent1999conformal, angenent2002laplace} that compute conformal parameterization via solving PDE on the extended complex plane, and the subsequent improvement \cite{choi2015flash,choi2016spherical,liao2024convergence} that propose north-south pole iterative schemes to mitigate angle distortion around poles. These methods are highly effective when the goal itself is a low-distortion parameterization, but they do not directly address general mapping tasks that must simultaneously satisfy task objectives and maintain controllable distortion.

\paragraph{Task-driven spherical mapping and registration}
In many imaging applications, the spherical map is not the end of the story but a representation that converts surface matching to registering signals on the sphere (e.g., curvature, sulcal depth, functional features). In computational neuroanatomy, classical spherical registration is central to cortical surface analysis and atlas building. Representative pipelines include spherical coordinate systems and spherical morphing used in widely adopted tools \cite{fischl1999high}, diffeomorphic registration on the sphere such as Spherical Demons \cite{yeo2009spherical}, and flexible frameworks like Multimodal Surface Matching (MSM) that align multiple features with explicit smoothness/regularization models \cite{robinson2014msm}. Recent years have seen significant progress in learning-based spherical surface registration. Spherical CNNs and spherical U-Net-type models introduced practical spherical convolutions for tasks such as segmentation/parcellation and spherical signal analysis \cite{cohen2018spherical,zhao2019spherical}. Examples include SphereMorph \cite{cheng2020cortical}, S3Reg \cite{Zhao2021S3Reg}, the deep-discrete spherical registration framework (DDR) \cite{suliman2022deep}, and recent graph-based spherical registration models such as SUGAR \cite{ren2024sugar}. These methods substantially decrease runtime and often achieve strong task accuracy. However, in these methods, the topology is preserved mainly through folding penalties and/or diffeomorphic parameterizations of the deformation model, while explicit and tunable control of geometric distortion is usually absent or only via an indirect smoothness regularization on the result mappings. This limitation becomes particularly acute when the task loss encourages highly localized stretching or twisting.

\paragraph{Distortion control via quasiconformal theory}
Quasiconformal (QC) geometry offers a principled mechanism to quantify and constrain distortion: the Beltrami coefficient/differential encodes local shear and anisotropic stretch, and regularization on its norm can control conformality distortion. This has motivated a series of work that represents mappings in distortion variables rather than vertex coordinates, enabling more direct control of geometric validity. Examples include Teichm\"uller maps for landmark matching and surface registration \cite{lui2014teichmuller}, Beltrami holomorphic flow \cite{lui2012optimization} and related Beltrami-based numerical solvers \cite{lui2013texture, qiu2019computing, meng2016tempo}, and optimization over constrained homeomorphism spaces parameterized by Beltrami coefficients \cite{lam2014landmark,lui2015splitting}. In parallel, quasi-conformal theories' application to spherical topology has also been studied, often through chart-based constructions \cite{choi2015flash,choi2016spherical}. While these approaches bring explicit distortion measures into the loop, practical gaps remain when applying QC/Beltrami formulations to spherical topology: (i) the sphere cannot be covered by a single chart, so need to adopt an iterative scheme to address pole/boundary artifacts; (ii) the optimization is typically carried out with ADMM-style schemes rather than end-to-end differentiable updates, making it less convenient to integrate general task losses and to scale to modern gradient-based learning/optimization pipelines; (iii) existing QC pipelines are often formulated around explicit landmark or boundary constraints to anchor the solution, whereas many practical registration settings are landmark-free and instead rely on dense feature or intensity objectives.

Overall, prior literature provides powerful tools for (i) spherical parameterization with optimized distortion and (ii) task-driven spherical registration, but relatively few approaches can jointly optimize task objectives while enforcing bijectivity and providing explicit, effective distortion control. This gap motivates approaches that represent spherical self-maps in distortion variables (e.g., Beltrami fields), use multi-chart formulations to avoid pole artifacts, and support efficient gradient-based optimization for diverse task losses.

\section{Contributions}
\label{sec: contributions}

Our work has several key contributions and novel aspects to the field of spherical bijective mapping problems, addressing the limitations of existing methods. Specifically, our contributions are as follows:
\begin{enumerate}
\item \textbf{Two-chart Beltrami formulation on the sphere.}
We introduce the \emph{Spherical Beltrami Differential (SBD)}, a two-chart representation of spherical quasiconformal self-maps with cross-chart compatibility. This provides a practical parameterization of spherical self-homeomorphisms in terms of distortion variables.

\item \textbf{BOOST: task-driven optimization of spherical self-maps in Beltrami space.}
We propose \emph{BOOST}, a differentiable optimization framework that updates two Beltrami fields to minimize task losses while regularizing distortion and enforcing global consistency across the seam (via boundary matching, seam smoothness, and anti-folding penalties). The framework targets general spherical mapping/registration problems.

\item \textbf{Validation on spherical registration tasks.}
We demonstrate the effectiveness of the proposed formulation and optimization strategy on representative tasks, including landmark matching, intensity-based spherical registration and hybrid objectives, with experiments conducted on brain cortical surface data. Results show improved task alignment with better controlled distortion and consistent bijective behavior.
\end{enumerate}

\section{Mathematical Background}
\label{sec: math background}

\subsection{Quasiconformal Mappings}
Quasiconformal mappings generalize conformal mappings by extending orientation-preserving homeomorphisms to those with bounded conformality distortions.
Mathematically, suppose $\Omega$ is a domain in $\mathbb{C}$, a mapping $f \colon \Omega \rightarrow \mathbb{C}$ is a quasi-conformal map if it satisfies the Beltrami equation
\begin{equation}\label{eq:beltrami}
    \frac{\partial f}{\partial \bar{z}} = \mu(z)\frac{\partial f}{\partial z}
\end{equation}
for some measurable complex-valued function $\mu$ with $\| \mu \|_{\infty}<1$, where $\frac{\partial f}{\partial z} = \frac{1}{2}(f_{x}-\sqrt{-1}f_{y})$ and $\frac{\partial f}{\partial \bar{z}} = \frac{1}{2}(f_{x}+\sqrt{-1}f_{y})$. $\mu$ is called the complex dilation or Beltrami coefficient of $f$,  which measures the local deviation from a conformal map. Infinitesimally, around a point $p$, $f(p+z)$ might be expressed as an affine map with negligible error $o(|z|)$:
\begin{equation}
    f(p+z) \approx f(p) + f_{z}(p)z + f_{\bar{z}}(p)\bar{z} = f(p) + f_{z}(p)(z+\mu(p)\bar{z})
\end{equation}
for a conformal map, its Taylor expansion excludes $\bar{z}$-terms, implying $\mu = 0$, and thus $\mu$ is a measure of nonconformality. Locally, $f$ can be viewed as a composition of a translation to $f(p)$, a stretch map $S(z)=z+\mu(p)\bar{z}$, and a multiplication of $f_{z}(p)$. The stretch map $S(z)$ distorts a circle to an ellipse, with $\mu(p)$ determining the angles of maximal magnification ($\arg \mu /2$ with factor $1+|\mu|$) and maximal shrinkage  ($(\arg \mu + \pi) /2$ with factor $1-|\mu|$).

Let $f \colon (x,y) \rightarrow (u,v) $ and $\mu = \rho + \sqrt{-1} \tau$, and denote by $\mathbf{S}(2)$ the space of all $2 \times 2$ symmetric positive definite matrices whose determinant is 1, then 
\begin{equation}
    D_f(z)^{T}D_{f}(z) = |\det D_f(z)|Q(z) 
\end{equation}
where $Q= (Q_{ij}) \colon \Omega \rightarrow \mathbf{S}(2)$. Right multiplying both sides by $D_f(z)^{-1}$,  we obtain a linear system which is the alternative formulation of the Beltrami equation:
\begin{equation} \label{eq:4.4}
    \begin{bmatrix}
    u_x & u_y \\
    v_x & v_y
    \end{bmatrix}^T = \text{sgn}\left(J_f(x)\right) \cdot 
    \begin{bmatrix}
    q_{11} & q_{12} \\
    q_{12} & q_{22}
    \end{bmatrix}
    \begin{bmatrix}
    v_y & -u_y \\
    -v_x & u_x
    \end{bmatrix},
\end{equation}
and from this we have $\mu = \frac{q_{11}-q_{22}+2\sqrt{-1}q_{12}}{q_{11}+q_{22}+2\text{sgn}(\det{D_f})}$. With the constraint that $Q(z) \in \mathbf{S}(2)$, we observe that $|\mu(z)|<1$ if and only if $\det D_f(z)>0$. By inverse function theorem, researchers would through restricting $\| \mu \|_{\infty}<1$, achieve local bijectivity and prevent folding in the triangular mesh \cite{lui2014teichmuller, choi2015flash}.

Suppose $\mu(z)$ is a measurable complex-valued function defined in
a domain $U\subseteq \mathbb{C}$ for which $\| \mu \|_{\infty}<1$, we have the following existence theorem.
\begin{theorem}
\label{thm: Riemann}
    (\textbf{Measurable Riemann Mapping Theorem}) For any function $\mu : U \to \mathbb{C}$ on with bounded essential supremum norm $\|\mu\|_\infty < 1$, there is a quasiconformal map $\phi$ on $\overline{U}$ satisfying the Beltrami equation $\phi_{\overline{z}} = \mu \phi_z$ for almost all $z \in U$. Moreover, $\phi$ is unique up to post-composition with conformal isomorphisms and $\phi$ depends holomorphically on $\mu$.
\end{theorem}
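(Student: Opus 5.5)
The plan is the classical Ahlfors--Bers argument: build the solution from the Beurling transform, then use Weyl's lemma for uniqueness and a Neumann-series expansion for holomorphic dependence.

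\textbf{Step 1: reduce to $U=\mathbb{C}$ with compact support.} Extend $\mu$ by zero outside $U$; the extension still has $\|\mu\|_\infty<1$. Any quasiconformal solution on $\mathbb{C}$ restricts to one on $U$, and by a Riemann-mapping argument also on $\overline{U}$. Next, split $\mu=\mu_1+\mu_2$ with $\mu_1$ supported in the unit disc and $\mu_2$ supported outside it. Conjugating by $z\mapsto 1/z$ turns the $\mu_2$ problem into a compactly supported one. So it suffices to solve for compactly supported $\mu$ and then compose solutions, since Beltrami coefficients transform explicitly under composition.

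\textbf{Step 2: existence for compactly supported $\mu$.} Introduce the Cauchy transform $Ph(z)=-\frac{1}{\pi}\int \frac{h(\zeta)}{\zeta-z}\,dA(\zeta)$ and the Beurling transform $T h=\partial_z P h$. Seek the normalized solution in the form $\phi(z)=z+P h(z)$. Then $\phi_{\bar z}=h$ and $\phi_z=1+Th$, so the Beltrami equation becomes the linear equation
\[
h=\mu(1+Th).
\]
Since $T$ is an isometry on $L^2$ and its $L^p$ norm $C_p$ tends to $1$ as $p\to 2$, we can fix $p>2$ with $\|\mu\|_\infty C_p<1$. The operator $h\mapsto \mu Th$ is then a contraction on $L^p$, and the Neumann series
\[
h=\sum_{n\ge 0}(\mu T)^n\mu
\]
converges. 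Because $p>2$, $Ph$ is Hölder continuous, so $\phi$ is continuous with $L^p$ derivatives.

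\textbf{Step 3: show $\phi$ is a homeomorphism.} This is the main obstacle. For smooth $\mu$, I would show that $J_\phi=|\phi_z|^2(1-|\mu|^2)$ never vanishes, by writing $\phi_z=e^{\sigma}$ where $\sigma$ solves an auxiliary Beltrami-type equation. Together with $\phi(z)\sim z$ at infinity, a degree argument then makes $\phi$ a smooth diffeomorphism of $\mathbb{C}$. For general measurable $\mu$, approximate by mollified $\mu_n\to\mu$ almost everywhere with $\|\mu_n\|_\infty\le k<1$. The $L^p$ estimates pass to the limit, giving $\phi_n\to\phi$ locally uniformly. The $\phi_n$ are normalized $K$-quasiconformal maps, so by compactness of that family (equicontinuity via modulus estimates) the limit $\phi$ is a homeomorphism.

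\textbf{Step 4: uniqueness and holomorphic dependence.} If $\phi$ and $\psi$ both solve the equation, compute the Beltrami coefficient of $\psi\circ\phi^{-1}$ with the composition formula. It vanishes almost everywhere, so by Weyl's lemma $\psi\circ\phi^{-1}$ is conformal. Holomorphic dependence follows because each term $(\mu T)^n\mu$ of the Neumann series is a polynomial in $\mu$. The series converges uniformly on balls $\|\mu\|_\infty\le k<1$, so for fixed $z$ the map $\mu\mapsto\phi^\mu(z)$ is holomorphic; the decomposition in Step 1 preserves this property.
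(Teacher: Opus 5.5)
The paper gives no proof of this statement. It quotes the Ahlfors--Bers theorem as background and uses it as a black box, so I am measuring your outline against the standard proof (Ahlfors, \emph{Lectures on Quasiconformal Mappings}, Ch.~V), which you follow closely. The $L^p$ contraction $h\mapsto\mu Th$, the $e^{\sigma}$ argument for smooth $\mu$, the mollification/normal-family step, and uniqueness via the composition formula and Weyl's lemma are the right ingredients. Since $(\mu T)^n\mu$ is a bounded homogeneous polynomial of degree $n+1$ in $\mu$, you do get holomorphic dependence whenever $\mu$ has compact support. That covers bounded $U$, including the chart images used in the paper's two-chart setting.

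The genuine gap is your last clause, ``the decomposition in Step 1 preserves this property.'' It does not, at least not by composing holomorphic maps. With $\mu_1=\mu\chi_{\mathbb{D}}$ and $\mu_2=\mu-\mu_1$, the factorization you actually need is $f^{\mu}=f^{\lambda}\circ f^{\mu_1}$ with
\[
\lambda=\Big(\mu_2\,\frac{f^{\mu_1}_z}{\overline{f^{\mu_1}_z}}\Big)\circ\big(f^{\mu_1}\big)^{-1}.
\]
This $\lambda$ depends on $\mu$ through $\overline{f^{\mu_1}_z}$, which is antiholomorphic in $\mu_1$, and through the inverse map $(f^{\mu_1})^{-1}$. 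So $\mu\mapsto\lambda$ is not holomorphic in general. Knowing that $\lambda\mapsto f^{\lambda}(w)$ and $\mu_1\mapsto f^{\mu_1}(z)$ are holomorphic therefore says nothing directly about $\mu\mapsto f^{\lambda}(f^{\mu_1}(z))$. The composite is holomorphic only because of a cancellation you have not exhibited.

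A related point: ``depends holomorphically'' only makes sense after fixing a normalization, e.g.\ fixing $0,1,\infty$. You never specify one for non-compactly supported $\mu$. Also, your $\phi=z+Ph$ does not fix $0$, so it must be translated before the inversion $z\mapsto 1/z$ in any case.

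You need an extra argument here; two standard ones work.

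\begin{itemize}
\item \emph{Ahlfors' route.} Factor through a \emph{fixed} base point: $f^{\mu}=f^{\lambda}\circ f^{\mu_0}$ with $\lambda=\big(\frac{\mu-\mu_0}{1-\overline{\mu_0}\,\mu}\,\frac{f^{\mu_0}_z}{\overline{f^{\mu_0}_z}}\big)\circ(f^{\mu_0})^{-1}$. This $\lambda$ \emph{is} holomorphic in $\mu$ and vanishes at $\mu_0$. Complex differentiability at $\mu_0$ then reduces to the first-variation formula for $\lambda\mapsto f^{\lambda}$ at $\lambda=0$.
\item \emph{Truncation}, which is closer to what you already have. The map $\mu\mapsto\mu\chi_{\{|z|<R\}}$ is linear. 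Take the three-point-normalized solutions $\frac{\phi(z)-\phi(0)}{\phi(1)-\phi(0)}$ with $\phi=\phi^{\mu\chi_{\{|z|<R\}}}$. They are holomorphic in $\mu$ by your compact-support argument. They are locally uniformly bounded on $\{\|\mu\|_\infty\le k\}$ by distortion estimates for normalized $K$-quasiconformal maps. They converge to $f^{\mu}(z)$ as $R\to\infty$ by normality plus uniqueness. Vitali's theorem on complex lines, together with local boundedness, then gives holomorphy on the whole unit ball of $L^\infty$.
\end{itemize}
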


For general Riemann surfaces, we have a similar concept, Beltrami differential.
\begin{definition}
    A Beltrami differential $\mu(z)\frac{\overline{dz}}{dz}$ on a Riemann surface $\mathcal{R}$ is an assignment to each chart $z_{\alpha}$ on $U_{\alpha}$ an $\mathcal{L}_{\infty}$ complex-valued function $\mu_{\alpha}$ defined on $z_{\alpha}(U_{\alpha})$ such that
    \[
    \mu_{\alpha}(z_{\alpha}) = \mu_{\beta}(z_{\beta}) \frac{\overline{(\frac{dz_{\beta}}{dz_{\alpha}})}}{\frac{dz_{\beta}}{dz_{\alpha}}}
    \]
    and $\| \mu \|_{\infty} = \sup_{\alpha}\|\mu_{\alpha}\|_{\infty}$
\end{definition}
The space of essentially bounded, complex-valued measurable Beltrami differentials on $\mathcal{R}$ is a Banach space, denoted by $L_{\infty}(\mathcal{R})$. The open unit ball of $L_{\infty}(\mathcal{R})$ is denoted by $M(\mathcal{R})$. Elements $\mu = (\mu_{\alpha})_{\alpha}$ in $M(\mathcal{R})$ are also called Beltrami coefficients.

\subsection{Least Squares Quasi-conformal Energy}

We define the unweighted LSQC energy as the $L^2$-norm of the Beltrami equation mismatch.
\begin{definition}
    \label{def: lsqc energy}
    Let \( \mu = \rho + \sqrt{-1}\tau \) be a complex-valued function defined on the domain \( \Omega \). The LSQC energy of the map \( z = (x,y) \mapsto (u,v) \) against the BC \( \mu \) is defined to be 
    $$\tilde{E}_{LSQC}(f, \mu) = \int_{\Omega} |f_{\bar{z}} - \mu f_z|^2 dA.$$ 
    Equivalently, 
    \begin{equation}
    \label{eq:4.7}
        \tilde{E}_{LSQC}(u,v, \mu) = \frac{1}{2} \int_{\Omega} \|\tilde{P}\nabla u + J\tilde{P}\nabla v\|^2 \, dx \, dy,
    \end{equation}
    where 
    $\tilde{P}=
    \begin{bmatrix}
    1 - \rho & -\tau \\ 
    -\tau & 1 + \rho
    \end{bmatrix}.$
\end{definition}

Given a piecewise linear map $f(x,y) = u+\sqrt{-1}v $ and consider a triangle $T=(x_j,y_j)_{j=1,2,3}$ of $\mathbb{R}^2$, we have 
$$
\begin{pmatrix}
    \partial u / \partial x \\
    \partial u / \partial y
\end{pmatrix}= 
\frac{1}{d_T}
\begin{pmatrix}
    y_2 - y_3 & y_3-y_1 & y_1-y_2\\
    x_3-x_2 & x_3-x_1 & x_1-x_2
\end{pmatrix}
\begin{pmatrix}
    u_1\\u_2\\u_3
\end{pmatrix}
$$
where $d_T=(x_1y_2-y_1x_2)+(x_2y_3-y_2x_3)+(x_3y_1-y_3x_1)$ is twice the area of the triangle. The Beltrami equation (Equation \ref{eq:beltrami}) can be rewritten as $(1-\mu)f_x + \sqrt{-1}(1+\mu)f_y=0$ and collected as a complex-valued linear system.
$$
0 = \frac{i}{d_T}(W_1  \quad W_2 \quad W_3)(U_1 \quad U_2 \quad U_3)^\top
$$
where $U_j=u_j+\sqrt{-1}v_j$ and 
\begin{align*}
    W_1 & = (1+\mu)(x_3-x_2)+\sqrt{-1}(1-\mu)(y_3-y_2) \\
    W_2 & = (1+\mu)(x_1-x_3)+\sqrt{-1}(1-\mu)(y_1-y_3) \\
    W_3 & = (1+\mu)(x_2-x_1)+\sqrt{-1}(1-\mu)(y_2-y_1).
\end{align*}
As a result, 
\begin{align*}
    E_{LSQC}(\textbf{U}=(U_1,\ldots,U_{|\mathcal{V}|})) &= \sum\limits_{T \in \mathcal{T}}d_T \Big|\left( (1-\mu)f_x + \sqrt{-1}(1+\mu)f_y \right)\Bigg|_T \Big|^{2} \\
    &=\sum_{T\in \mathcal{T}}\frac{1}{d_T} \Big|
    (W_{1,T}  \quad W_{2,T} \quad W_{3,T})(U_{1,T} \quad U_{2,T} \quad U_{3,T})^\top\Big|^2
\end{align*}
and we may write as $E_{LSQC}(\textbf{U})=\| \mathcal{M}\textbf{U}\|^2$ with $\mathcal{M}=(m_{ij}) \in \mathbb{C}^{|\mathcal{F}| \times |\mathcal{V}|}$ defined as
\[
\mathcal{M}_{ij} = 
\begin{cases} 
w_{ij} = W_{j,T_i} & \text{if j is a vertex of the face } T_i, \\ 
0 & \text{otherwise}.
\end{cases}.
\]
To obtain a nontrivial solution, some of the $U_i$'s have to be pinned. 

We decomposed $\mathbf{U}=(\mathbf{U}_f^\top, \mathbf{U}_{p}^\top)^\top$, where $\mathbf{U}_f$ are free points, i.e. variables of the optimization problem, and $\mathbf{U}_f$ are the points pinned. Similarly, we can decompose $\mathcal{M}$ in blocked matrices
\[
\mathcal{M} = \begin{pmatrix}
    \mathcal{M}_f, \mathcal{M}_p
\end{pmatrix}
\]
with $\mathcal{M}_f \in \mathbb{C}^{|\mathcal{F}| \times (|\mathcal{V}|-p)} $ and $\mathcal{M}_p \in \mathbb{C}^{|\mathcal{F}| \times p} $.
Let $^1$ and $^2$ be real and imaginary parts of a complex number, and rewrite the $|\mathcal{F}|$ complex linear equations to $2|\mathcal{F}|$ real equations, we have 
\[
E_{LSQC}(\mathbf{U}) = \| \mathcal{A} \mathbf{u}-\mathbf{b} \|^2
\]
where $\mathbf{u}=((\mathbf{U}_{f}^{1}) ^\top,(\mathbf{U}^2_f)^\top)^\top$ and 
\[
\mathcal{A} = \begin{pmatrix}
    \mathcal{M}_f^1 & -\mathcal{M}_f^2 \\
    \mathcal{M}_f^2 & \mathcal{M}_f^1
\end{pmatrix}, 
\mathbf{b} = - \begin{pmatrix}
    \mathcal{M}_p^1 & -\mathcal{M}_p^2 \\
    \mathcal{M}_p^2 & \mathcal{M}_p^1
\end{pmatrix}
\begin{pmatrix}
    \mathbf{U}_p^1\\
    \mathbf{U}_p^2
\end{pmatrix}.
\]
Finally, the nontrivial solution is $\mathbf{U}=\begin{pmatrix}
    (\mathcal{A}^\top \mathcal{A})^{-1}\mathcal{A}^{\top}\mathbf{b}\\ \mathbf{U}_p
\end{pmatrix}$.

In the previous work \cite{xu2025neural}, we prove multiple key properties of LSQC, listed as follows:
\begin{proposition}(\textbf{Nontriviality})
    Suppose $|\mu|$ is uniformly bounded away from 1, and the triangulation mesh is connected without dangling triangles. As long as $p \geq 2$, then $\mathcal{A}$ has full rank.
\end{proposition}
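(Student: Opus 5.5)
Since $\mathcal{A}$ is the real $2\times 2$ block realization of the complex matrix $\mathcal{M}_f$, and that realization is injective exactly when $\mathcal{M}_f$ is, it suffices to show that $\mathcal{M}_f$ has trivial kernel over $\mathbb{C}$. So we take $\mathbf{U}_f$ with $\mathcal{M}_f\mathbf{U}_f=0$, extend it by zero on the $p\geq 2$ pinned vertices to a full vector $\mathbf{U}$ with $\mathcal{M}\mathbf{U}=0$, and let $f$ be the piecewise linear interpolant of $\mathbf{U}$. The goal is to show that $\mathbf{U}$ vanishes identically.

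\textbf{Step 1: per-triangle reduction.} We read $\mathcal{M}\mathbf{U}=0$ one triangle at a time. On each $T$, $f$ is affine, $f(z)=az+b\bar z+c$, and the row of $\mathcal{M}$ for $T$ is, up to the nonzero factor $i/d_T$, exactly the discretized Beltrami residual. Hence $f_{\bar z}=\mu_T f_z$ on $T$, that is, $b=\mu_T a$. If $a\neq 0$, then $|b|\le \|\mu\|_\infty|a|<|a|$, so $\det D_f=|a|^2-|b|^2>0$ and $f|_T$ is an orientation-preserving affine bijection. If $a=0$, then $b=0$ and $f$ is constant on $T$. So every triangle is either nondegenerate with positive orientation or collapsed to a point.

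\textbf{Step 2: propagation through adjacency.} Two triangles sharing an edge agree on that edge. If one of them is collapsed, the shared edge maps to a point, and an affine bijection cannot send a nondegenerate edge to a point, so the neighbour is collapsed too. The hypothesis of a connected mesh without dangling triangles means the dual graph, with edge-adjacency, is connected. Therefore either every triangle is collapsed, or none is. In the first case $f$ is constant on the whole connected mesh, and since it vanishes at the two pinned vertices, $\mathbf{U}=0$.

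\textbf{Step 3: the main obstacle.} The hard case is when all triangles are nondegenerate. Then $f$ is a local orientation-preserving piecewise-affine immersion that sends two distinct vertices $v_1,v_2$ to the same point $0$. Bounding $|\mu|$ away from $1$ alone does not rule this out; for example, a globally conformal map on an annulus-like mesh could wrap around. I would try two routes. The first uses the structure of LSQC on a disk-type mesh with free boundary: a discrete argument-principle or degree argument on the boundary, where local orientation preservation forces $f$ to be a local homeomorphism that is injective and contradicts $f(v_1)=f(v_2)$. The second route, which I expect is what the authors used in \cite{xu2025neural}, is to note that $E_{LSQC}$ is a positive semidefinite quadratic form in $\mathbf{U}$. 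One compares it with the Dirichlet energy, $\|\tilde P\nabla u+J\tilde P\nabla v\|^2\ge c(1-\|\mu\|_\infty)^2\,(\text{Dirichlet} - 2\,\text{Area})$ as in the classical LSCM analysis of L\'evy et al. One then shows that a null vector satisfies a signed-area identity: the total signed area equals a boundary integral. This identity forces the triangle images to be consistently oriented with positive area, and hence forces an injective embedding of the boundary. Pinning two points to the same location then gives a contradiction. The uniform bound $1-\|\mu\|_\infty\ge\delta>0$ is what keeps the affine pieces nondegenerate and the constants uniform. I expect this global injectivity step, and making precise how the ``no dangling triangles'' hypothesis enters it, to be the delicate part of the argument.
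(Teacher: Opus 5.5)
Your Steps 1 and 2 are correct. Each row of $\mathcal{M}$ is, up to a nonzero factor, the residual $2(f_{\bar z}-\mu_T f_z)$ of the piecewise linear interpolant. So a null vector is affine with $b_T=\mu_T a_T$ on every triangle, and collapse propagates across shared edges. The genuine gap is Step 3, and it cannot be closed. Both of your routes need a nonconstant null vector, with all triangles positively oriented, to take distinct values at the two pins. That is false, so the proposition as stated does not follow from $\|\mu\|_\infty<1$, connectivity and $p\ge 2$. (The paper gives no proof here, only a pointer to \cite{xu2025neural}.)

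Concretely, take the hexagonal fan with centre $0$, boundary vertices $w_k=e^{i\pi k/3}$ ($k$ mod $6$) and triangles $(0,w_k,w_{k+1})$. Let $g$ be the piecewise linear map with $g(0)=0$ and $g(w_k)=e^{2\pi ik/3}$, a discrete $z\mapsto z^2$. Each image triangle is nondegenerate and positively oriented. On $(0,w_0,w_1)$ one finds $g(z)=az+b\bar z$ with $a=1+i/\sqrt3$ and $b=-i/\sqrt3$, so $|b/a|=1/2$. Define $\mu_T$ as the Beltrami coefficient of $g$ on $T$; by rotational symmetry $|\mu_T|=1/2$ on all six triangles. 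Then $\mathcal{M}g=0$ and $\mathcal{M}\mathbf{1}=0$. Pin $w_0$ and $w_3$, where $g(w_0)=g(w_3)=1$. The vector $g-\mathbf{1}$ vanishes on the pins, equals $-1$ at the centre, and its free part lies in $\ker\mathcal{M}_f$. So $\mathcal{A}$ is rank deficient on a connected disk mesh without dangling triangles, with $|\mu|\equiv 1/2$ and $p=2$.

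This shows why your routes break. Positivity of every triangle only makes the null map orientation-preserving; it may still branch (here at the centre) or overlap itself. The signed-area/boundary identity holds for every piecewise linear map, so neither it nor a degree argument yields injectivity. Your suggested mechanism is also slightly off. For $\mu\equiv 0$, a null vector on an edge-connected mesh is one global similarity, since two maps $z\mapsto az+c$ agreeing on an edge coincide. Wrapping therefore requires $\mu\neq 0$, not an annulus.

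What your ingredients do prove needs one addition. The coefficient of a vertex in its triangle's row is $e+\mu_T\bar e$, with $e$ the opposite edge vector, and $|e+\mu_T\bar e|\ge(1-|\mu_T|)|e|>0$. Hence a null vector is determined on the whole edge-connected mesh by its restriction $az+\mu_{T_0}a\bar z+c$ to one triangle $T_0$. This gives $\dim_{\mathbb{C}}\ker\mathcal{M}\le 2$ with the constants inside, and this is exactly where connectivity and the absence of dangling triangles enter. Thus $\mathcal{A}$ has full rank if and only if one of the following holds:
\begin{itemize}
\item $\ker\mathcal{M}$ consists only of constants;
\item $\ker\mathcal{M}$ is spanned by $\mathbf{1}$ and some $g$ that is not constant on the pinned vertices.
\end{itemize}
For $\mu\equiv 0$ that $g$ is the identity, which recovers the classical LSCM result. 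For generic $\mu$ on a mesh with an interior vertex, the propagation fails to close up around that vertex, so $\ker\mathcal{M}$ is only the constants. In general, however, the separation property must be added as a hypothesis; the assumptions of the proposition do not imply it.
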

\begin{proposition}(\textbf{Exact solution to Beltrami Equation})
    Suppose $|\mu|$ is uniformly bounded away from 1,  and $\mathbf{U}$ is the solution. If $|\mathcal{F}|=|\mathcal{V}|-2$, then $\frac{\partial \mathbf{U}}{\partial \bar{z}}=\mu \frac{\partial \mathbf{U}}{\partial z}$.
\end{proposition}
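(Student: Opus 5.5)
The plan is to reduce the statement to a dimension count combined with the preceding Nontriviality proposition. I will take the situation in which that proposition applies: exactly $p=2$ vertices are pinned. This is the natural reading, since $|\mathcal{V}|-2$ is then the number of free complex unknowns.

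\textbf{Step 1 (shape of $\mathcal{A}$).} The free block $\mathcal{M}_f$ lies in $\mathbb{C}^{|\mathcal{F}|\times(|\mathcal{V}|-2)}$. The realified matrix $\mathcal{A}$ therefore has size $2|\mathcal{F}|\times 2(|\mathcal{V}|-2)$. Under the hypothesis $|\mathcal{F}|=|\mathcal{V}|-2$, $\mathcal{A}$ is a square real matrix.

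\textbf{Step 2 (invertibility).} Since $|\mu|$ is uniformly bounded away from $1$ and $p\ge 2$, the Nontriviality proposition gives that $\mathcal{A}$ has full rank. Being square, $\mathcal{A}$ is invertible, and hence so is $\mathcal{A}^\top\mathcal{A}$.

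\textbf{Step 3 (zero residual).} The LSQC solution is $\mathbf{u}=(\mathcal{A}^\top\mathcal{A})^{-1}\mathcal{A}^\top\mathbf{b}$. Using invertibility of $\mathcal{A}$ and $\mathcal{A}^\top$,
\[
\mathbf{u}=\mathcal{A}^{-1}(\mathcal{A}^{\top})^{-1}\mathcal{A}^\top\mathbf{b}=\mathcal{A}^{-1}\mathbf{b}.
\]
Hence $\mathcal{A}\mathbf{u}=\mathbf{b}$ and $E_{LSQC}(\mathbf{U})=\|\mathcal{A}\mathbf{u}-\mathbf{b}\|^2=0$.

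\textbf{Step 4 (back to the Beltrami equation).} The energy is $E_{LSQC}=\sum_T d_T^{-1}\bigl|\sum_j W_{j,T}U_{j,T}\bigr|^2$. The weights $d_T^{-1}$ are positive, since we assume a nondegenerate, consistently oriented triangulation so that $d_T>0$. Therefore $E_{LSQC}=0$ forces each face term to vanish, i.e.\ $(1-\mu)f_x+\sqrt{-1}(1+\mu)f_y=0$ on every triangle $T$. This is algebraically equivalent to $f_{\bar z}=\mu f_z$, via $f_z=\tfrac12(f_x-\sqrt{-1}f_y)$ and $f_{\bar z}=\tfrac12(f_x+\sqrt{-1}f_y)$. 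Since $\mu$ is constant on each face, this gives $\frac{\partial\mathbf{U}}{\partial\bar z}=\mu\frac{\partial\mathbf{U}}{\partial z}$ on every face.

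\textbf{Main obstacle.} The only delicate point is Step 2's reliance on the Nontriviality proposition. Its hypotheses (connected mesh, no dangling triangles, $p\ge2$) have to hold in the setting of the statement, and the count $|\mathcal{F}|=|\mathcal{V}|-2$ must be matched with exactly $p=2$ pins so that $\mathcal{A}$ is square rather than merely injective. If more than two vertices were pinned, $\mathcal{A}$ would be tall: the system would be overdetermined and the residual need not vanish. I would therefore state explicitly that $p=2$ is being used, and otherwise invoke the earlier proposition as a black box.
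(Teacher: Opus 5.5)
Your proposal is correct and follows essentially the standard argument behind this proposition; note that the paper itself only cites the authors' earlier LSQC work for the proof. With exactly two pins, $|\mathcal{F}|=|\mathcal{V}|-2$ makes $\mathcal{A}$ square, Nontriviality makes it invertible, so $\mathbf{u}=\mathcal{A}^{-1}\mathbf{b}$ has zero residual and $f_{\bar z}=\mu f_z$ holds exactly on every face; your explicit choice $p=2$ is the right reading, and in Step 4 you do not even need $d_T>0$, since $\mathcal{A}\mathbf{u}=\mathbf{b}$ already forces every face residual $\sum_j W_{j,T}U_{j,T}$ to vanish.
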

\begin{proposition}(\textbf{Invariance by similarity transformation})
\label{prop: independent of similarity transformation}
    Given $\mu$ and pinned points $\mathbf{U}_p$ and denote the corresponding free-point solution by $\mathbf{U}_{f}$, if the pinned points are transformed to $z\mathbf{U}_p+\mathbf{T}$ where $z \in\mathbb{C}, \mathbf{T} = (z',\ldots,z')^\top\in \mathbb{C}^{|\mathcal{V}|}$, then the corresponding free-point solution is $z\mathbf{U}_f+\mathbf{T}$.
\end{proposition}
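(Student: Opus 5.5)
The plan is to work directly with the discrete least-squares problem and to show that the transformed candidate is again the minimizer. Write $\mathcal{M} = (\mathcal{M}_f, \mathcal{M}_p)$. In complex form the LSQC energy with pinned data $\mathbf{U}_p$ is
\[
E(\mathbf{U}_f) = \|\mathcal{M}_f \mathbf{U}_f + \mathcal{M}_p \mathbf{U}_p\|^2 .
\]
The real reformulation $\|\mathcal{A}\mathbf{u}-\mathbf{b}\|^2$ is just the identification $\mathbb{C}\cong\mathbb{R}^2$, so nothing changes if I argue in complex form. By the Nontriviality proposition, $\mathcal{A}$ has full column rank, hence $\mathcal{M}_f$ is injective as a complex matrix. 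The minimizer is therefore unique and is characterized by the normal equations $\mathcal{M}_f^{*}(\mathcal{M}_f\mathbf{U}_f + \mathcal{M}_p\mathbf{U}_p)=0$, where ${}^{*}$ is the conjugate transpose. The matrices $\mathcal{A}^\top\mathcal{A}$ and $\mathcal{A}^\top\mathbf{b}$ are exactly the real form of these complex equations.

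The first key step is to check that constants lie in the kernel of $\mathcal{M}$, i.e.\ that each row sums to zero: $W_{1,T}+W_{2,T}+W_{3,T}=0$. This holds because the differences $(x_3-x_2)+(x_1-x_3)+(x_2-x_1)$ telescope to $0$, and likewise for the $y$-terms. Geometrically, a constant map has $f_x=f_y=0$, so it solves the Beltrami equation on every triangle. It follows that $\mathcal{M}_f\mathbf{1}_f + \mathcal{M}_p\mathbf{1}_p = 0$, where $\mathbf{1}$ is the all-ones vector with its free and pinned parts.

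The second step is to test the candidate $\mathbf{V}_f := z\mathbf{U}_f + z'\mathbf{1}_f$ against the new pinned data $z\mathbf{U}_p + z'\mathbf{1}_p$. The residual becomes
\[
\mathcal{M}_f(z\mathbf{U}_f + z'\mathbf{1}_f) + \mathcal{M}_p(z\mathbf{U}_p + z'\mathbf{1}_p) = z(\mathcal{M}_f\mathbf{U}_f + \mathcal{M}_p\mathbf{U}_p) + z'\cdot 0 .
\]
Multiplying this by $\mathcal{M}_f^*$ gives $z\cdot 0 = 0$ by the original normal equations, so $\mathbf{V}_f$ satisfies the new normal equations. Uniqueness then identifies $\mathbf{V}_f$ as the solution.

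The only point needing care is that complex scalar multiplication by $z$ is $\mathbb{R}$-linear but not just a real scaling. I must confirm that the real block structure $\begin{pmatrix}\mathcal{M}^1 & -\mathcal{M}^2\\ \mathcal{M}^2 & \mathcal{M}^1\end{pmatrix}$ commutes with the real $2\times 2$ representation of multiplication by $z$. This is immediate, since both are realifications of complex-linear maps. Equivalently, because $E$ is the squared norm of a complex-affine function of $\mathbf{U}_f$, its unique minimizer over $\mathbb{C}^{n}$ coincides with the real least-squares solution $(\mathcal{A}^\top\mathcal{A})^{-1}\mathcal{A}^\top\mathbf{b}$. This identification is the main (mild) obstacle; the rest is the telescoping-sum computation.
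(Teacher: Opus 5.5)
Your proof is correct and complete. The telescoping identity $W_{1,T}+W_{2,T}+W_{3,T}=0$ puts constants in the kernel of $\mathcal{M}$; the $1/d_T$ row weights do not affect this. Complex linearity pulls out $z$, and uniqueness follows from the Nontriviality proposition applied to the same $\mathcal{M}_f$, since only the pinned values change, not the pinned indices or $\mu$.

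The paper does not prove this proposition itself and defers to \cite{xu2025neural}, so there is no in-paper proof to compare against. Your normal-equation argument is the standard one. It is equivalent to noting that $\mathbf{V}\mapsto z\mathbf{V}+\mathbf{T}$ multiplies the LSQC energy by $|z|^{2}$, and your version also covers $z=0$ without a separate remark.
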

\begin{proposition}(\textbf{Resolution independence})
\label{prop: resolution independence}
    Suppose we have a triangulation $\mathcal{T}=(\mathcal{V},\mathcal{F})$ with piecewise constant $\mu = \{\mu_{T}: T \in \mathcal{F}\}$, and pinned points $\mathbf{U}_p$, and denote the corresponding solution by $\mathbf{U} = (\mathbf{U}_f, \mathbf{U}_p)$. Let $T=(\mathbf{v}_1,\mathbf{v}_2,\mathbf{v}_3)$ be a face in $T$ and we split $T$ into three triangles with $\mathbf{v}=\sum_{i=1}^{3}\alpha_j\mathbf{v}_j~ (\alpha_j>0)$ being the introduced vertex and $T_i$ being the new triangles not containing $\mathbf{v}_i$ as vertex. If $\mu_{T_i}=\mu_{T}~(i=1,2,3)$ and BCs on other faces remain, then the minimizer of this new minimization problem is $\mathbf{U}^{+}=(\mathbf{U}^\top, \mathbf{U}_{v})^\top$ where $\mathbf{U}_v=\sum_{i=1}^{3}\alpha_i\mathbf{U}_i$.
\end{proposition}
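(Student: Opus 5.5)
The plan is to show that the extended vector $\mathbf{U}^{+}$, with $\mathbf{U}_v=\sum_i\alpha_i\mathbf{U}_i$, is a minimizer of the refined LSQC problem. The argument has two parts. First, the refined energy evaluated at $\mathbf{U}^{+}$ equals the original minimal energy. Second, no competitor can do better, because the refined energy dominates the coarse energy of a suitably projected map. Uniqueness then follows from the Nontriviality proposition, since the refined mesh is still connected and has no dangling triangles, so $\mathcal{A}^\top\mathcal{A}$ is invertible.

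\textbf{Step 1 (the extension realizes the same energy).} The piecewise linear map on $T$ determined by $\mathbf{U}_1,\mathbf{U}_2,\mathbf{U}_3$ is affine on $T$. Its value at the interior point $\mathbf{v}=\sum\alpha_j\mathbf{v}_j$ is exactly $\sum\alpha_j\mathbf{U}_j$, because $\sum\alpha_j=1$ makes the $\alpha_j$ barycentric coordinates. Hence the piecewise linear interpolant of $\mathbf{U}^{+}$ on the refined mesh coincides pointwise with the interpolant of $\mathbf{U}$ on the original mesh.

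It follows that on each $T_i$ the gradients $f_x,f_y$ equal those on $T$. Since $\mu_{T_i}=\mu_T$, the integrand $|(1-\mu)f_x+\sqrt{-1}(1+\mu)f_y|^2$ is the same constant on each $T_i$. The discrete energy is area-weighted, via the factor $d_T$, and $d_{T_1}+d_{T_2}+d_{T_3}=d_T$. Therefore the contribution of $T$ is preserved, and $E^{+}(\mathbf{U}^{+})=E(\mathbf{U})$.

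\textbf{Step 2 (lower bound for any competitor).} Take an arbitrary refined configuration $(\mathbf{V},V_v)$ with the same pinned values; here $V_v$ is free, and $\mathbf{v}$ is never pinned. On the three subtriangles, the energy is the area-weighted sum of $|L(\nabla f|_{T_i})|^2$, where $L(g)=(1-\mu_T)g_x+\sqrt{-1}(1+\mu_T)g_y$ is a real-linear map of the gradient.

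The key identity is that the area-weighted average of the three gradients is the gradient of the affine interpolant of $\mathbf{V}_1,\mathbf{V}_2,\mathbf{V}_3$ on $T$. This holds because $\int_T\nabla f\,dA$ depends only on the boundary values, by the divergence theorem, and $f$ is linear along $\partial T$. By Jensen's inequality, equivalently convexity of $|L(\cdot)|^2$, we get
\[\sum_i d_{T_i}|L(\nabla f|_{T_i})|^2\ \ge\ d_T\,|L(\nabla f|_T^{\mathrm{coarse}})|^2.\]
All other faces are unchanged. So $E^{+}(\mathbf{V},V_v)\ge E(\mathbf{V})\ge E(\mathbf{U})=E^{+}(\mathbf{U}^{+})$. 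This proves that $\mathbf{U}^{+}$ is a minimizer, and uniqueness gives equality.

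\textbf{Main obstacle.} The delicate step is the averaging identity in Step 2. It must be stated carefully for piecewise linear $f$ on the subdivided triangle, with attention to consistent orientation and the sign of $d_T$. It should also be checked against the paper's exact discrete normalization: the weighting is $\frac{1}{d_T}|W\cdot U|^2$, and since $W/d_T$ is proportional to $L(\nabla f)$ this equals $d_T|L(\nabla f)|^2$ up to a constant. I would verify this normalization explicitly first, and then apply the convexity argument.
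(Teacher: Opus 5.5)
Your proof is correct, and both points you flagged resolve cleanly. Since $L(\nabla f|_T)=\frac{\sqrt{-1}}{d_T}\sum_j W_{j,T}U_j$, the face term $\frac{1}{d_T}|\sum_j W_{j,T}U_j|^2$ equals $d_T|L(\nabla f|_T)|^2$ exactly, not merely up to a constant. For the split, $d_{T_i}=\alpha_i d_T$, with the same sign as $d_T$, so your Jensen weights are simply the $\alpha_i$.

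This paper does not prove the proposition itself; it defers to the earlier SBN work. The natural comparison is therefore a direct discrete check in the matrix setting, which would run as follows.
\begin{itemize}
\item The coefficient of $U_v$ in the row of $T_i$ is exactly the coarse coefficient $W_{i,T}$, because it depends only on the edge $\mathbf{v}_j\mathbf{v}_k$. Hence these three coefficients sum to zero.
\item At $\mathbf{U}^{+}$ the three residuals all equal the coarse residual on $T$.
\item Together these give $\partial E^{+}/\partial\overline{U_v}=0$ at $\mathbf{U}^{+}$.
\item Since $E(\mathbf{U})=E^{+}(\mathbf{U},\sum_i\alpha_iU_i)$ identically, the chain rule carries the coarse normal equations over to the other free vertices, and full rank gives uniqueness.
\end{itemize}

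Your argument replaces this stationarity check with a global inequality. The underlying content is the same: your divergence-theorem identity is the continuum form of ``the $U_v$-coefficients sum to zero.'' Your version has several advantages:
\begin{itemize}
\item It is coordinate-free and never computes sub-triangle coefficients.
\item It proves the stronger condensation identity $\min_{V_v}E^{+}(\mathbf{V},V_v)=E(\mathbf{V})$ for every $\mathbf{V}$.
\item It extends verbatim to any refinement of $T$ that leaves $\partial T$ unsubdivided, and to any convex per-face integrand.
\end{itemize}
The discrete check, for its part, stays entirely inside the $\|\mathcal{A}\mathbf{u}-\mathbf{b}\|^2$ formulation the paper works with.

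A small strengthening is available. Write $f=f^{\mathrm{coarse}}+\delta\,\phi_v$, with $\delta=V_v-\sum_i\alpha_iV_i$ and $\phi_v$ the hat function of $\mathbf{v}$. Then your Jensen gap equals $|\delta|^2\sum_i d_{T_i}|L(\nabla\phi_v|_{T_i})|^2$. This is strictly positive for $\delta\neq 0$, because $|L(h)|\ge(1-|\mu_T|)|h|$ for real $h$. So uniqueness of the refined minimizer follows directly from uniqueness of the coarse one, without re-verifying the rank hypotheses on the refined mesh.
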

For detailed proofs, see \cite{xu2025neural}.

\section{Spherical Beltrami Differential and Spectral Beltrami Network}
\label{sec: spherical beltrami differential}
In this section, we present our theoretical contributions that form the mathematical foundation for neural-based spherical mapping optimization. 
We first establish fundamental correspondence results that extend the classical Beltrami theory to the spherical setting and then propose a Beltrami differential optimization framework for the problem, followed by introducing the Spectral Beltrami Network.
\subsection{Spherical Beltrami Differential}
In this subsection, we introduce the concept of spherical Beltrami diffferential associated to a spherical quasiconformal map. Before establishing the relation between spherical Beltrami differentials and spherical quasiconformal maps, we give two lemmas.
\begin{lemma}\label{lemma:4.4}
    Let $f$ be a quasiconformal map from $\mathcal{R} = (U_{\alpha},z_{\alpha})_{\alpha}$ to $\mathcal{R'}=$ \linebreak $ (V_{\beta}, w_{\beta})_{\beta}$, then the Beltrami coefficients $\mu_{\alpha}$ of $w_{\beta} \circ f \circ (z_{\alpha})^{-1}$ determine an element $\mu$ in $M(\mathcal{R}).$
\end{lemma}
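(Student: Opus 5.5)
The plan is to verify directly that the family $(\mu_\alpha)_\alpha$ satisfies the transformation rule in the definition of a Beltrami differential, and that its sup norm is strictly below one. First I fix the setting. For each chart $z_\alpha$ on $U_\alpha$, consider a target chart $w_\beta$ with $f(U_\alpha)\cap V_\beta\neq\emptyset$, and set
\[
f_{\alpha\beta}=w_\beta\circ f\circ z_\alpha^{-1}
\]
on $z_\alpha(U_\alpha\cap f^{-1}(V_\beta))$. Let $\mu_{\alpha\beta}$ be its Beltrami coefficient.

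The first step is to show that $\mu_{\alpha\beta}$ does not depend on the target chart $\beta$. On an overlap, $w_{\beta'}\circ f\circ z_\alpha^{-1}=h\circ f_{\alpha\beta}$ with $h=w_{\beta'}\circ w_\beta^{-1}$ holomorphic. By the chain rule, $(h\circ g)_{\bar z}=h'(g)\,g_{\bar z}$ and $(h\circ g)_z=h'(g)\,g_z$, since $h_{\bar w}=0$. The factor $h'(g)\neq0$ therefore cancels in the quotient, so $\mu_\alpha:=\mu_{\alpha\beta}$ is well defined a.e. on $z_\alpha(U_\alpha)$; patching over $\beta$ is consistent because the values agree on overlaps.

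The second step is the compatibility rule. On $U_\alpha\cap U_\gamma$, write $z_\gamma=\phi(z_\alpha)$ with $\phi$ holomorphic, and write $g_\alpha=g_\gamma\circ\phi$, where $g_\gamma=w_\beta\circ f\circ z_\gamma^{-1}$. The chain rule for Wirtinger derivatives with $\phi_{\bar z}=0$ gives
\[
\partial_{\bar z_\alpha}g_\alpha=(\partial_{\bar z_\gamma}g_\gamma)\,\overline{\phi'}, \qquad \partial_{z_\alpha}g_\alpha=(\partial_{z_\gamma}g_\gamma)\,\phi'.
\]
Dividing yields
\[
\mu_\alpha(z_\alpha)=\mu_\gamma(z_\gamma)\,\frac{\overline{dz_\gamma/dz_\alpha}}{dz_\gamma/dz_\alpha},
\]
which is exactly the rule in the definition. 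The chain rule is legitimate a.e. because quasiconformal maps are in $W^{1,2}_{loc}$ and differentiable a.e., and composition with a conformal map preserves null sets.

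The third step is the norm bound. Since $f$ is $K$-quasiconformal, each local expression satisfies $|\mu_\alpha|\le k=(K-1)/(K+1)<1$ a.e. Hence $\|\mu\|_\infty=\sup_\alpha\|\mu_\alpha\|_\infty\le k<1$, so $\mu\in M(\mathcal R)$. I expect the main obstacle to be this uniform bound: it needs the definition of quasiconformality on Riemann surfaces to include a global dilatation bound $K$, rather than the chart-wise condition $\|\mu_\alpha\|_\infty<1$. Without a global $K$ the supremum over infinitely many charts could reach $1$. I would state that assumption explicitly, noting that $|\mu_\alpha|$ is chart-invariant by the second step. In the case of interest, compactness of $\mathbb S^2$ with finitely many charts (two) makes the finite supremum automatically less than $1$.
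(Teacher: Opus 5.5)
Your proposal is correct and follows the same route as the paper. You first show the coefficient is unchanged under the holomorphic target transition $w_{\beta'}\circ w_\beta^{-1}$, then apply the Wirtinger chain rule with the holomorphic source transition to get $\mu_\alpha=(\mu_\gamma\circ\phi)\,\overline{\phi'}/\phi'$; your third step, which bounds $\|\mu\|_\infty$ by $(K-1)/(K+1)<1$ using a global dilatation bound and the chart-invariance of $|\mu|$, is left implicit in the paper's proof and is a welcome completion rather than a different approach.
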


\begin{proof}
    Denote $w_{\beta} \circ f \circ (z_{\alpha})^{-1}$ by $f_{\alpha \beta}$. We first need to prove that the Beltrami coefficients $\mu_{\alpha, \beta}$ of $f_{\alpha \beta}$ are independent of the choice of parameterizations $w_{\beta}$, thus can be denoted as $\mu_{\alpha}$.

    Suppose $\beta_1, \beta_2$ are two charts such that $f(U_{\alpha}) \cap V_{\beta_1} \cap V_{\beta_2} \neq \emptyset$ and denote the intersection by $V_{\alpha,1,2}$. Then for $z \in z_{\alpha}^{-1} \circ f^{-1}(V_{\alpha,1,2})$, $\mu_{\alpha, \beta_{1}}(z) = (f_{\alpha, \beta_1})_{\bar{z}} / (f_{\alpha, \beta_1})_{z}$ and $\mu_{\alpha, \beta_{2}}(z) = (f_{\alpha, \beta_2})_{\bar{z}} / (f_{\alpha, \beta_2})_{z}$. If we denote the conformal transition $w_{\beta_2} \circ w_{\beta_1}^{-1}$ by $w_{\beta_1,\beta_2}$, then $f_{\alpha, \beta_2} = w_{\beta_1,\beta_2} \circ f_{\alpha, \beta_1} $ and by chain rule and $(w_{\beta_1,\beta_2})_{\bar{z}} = 0$ , we have
    \begin{equation*}
        \begin{aligned}
           \mu_{\alpha, \beta_{2}}(z) & = \frac{(w_{\beta_1,\beta_2})_{z}  \cdot (f_{\alpha, \beta_1})_{\bar{z}} + (w_{\beta_1,\beta_2})_{\bar{z}}  \cdot (\overline{f_{\alpha, \beta_1}})_{\bar{z}} }{(w_{\beta_1,\beta_2})_{z}  \cdot (f_{\alpha, \beta_1})_{z} + (w_{\beta_1,\beta_2})_{\bar{z}}  \cdot (\overline{f_{\alpha, \beta_1}})_{z}} \\
           &=\frac{(w_{\beta_1,\beta_2})_{z}  \cdot (f_{\alpha, \beta_1})_{\bar{z}}}{(w_{\beta_1,\beta_2})_{z}  \cdot (f_{\alpha, \beta_1})_{z}} = \mu_{\alpha, \beta_{1}}(z)
        \end{aligned}
    \end{equation*}

    Next, show that the family of Beltrami coefficients induced by $f$ satisfies the consistency requirement of Beltrami differential. Denote the conformal transition $z_{\alpha_2} \circ z_{\alpha_1}^{-1}$ by $z_{\alpha_1,\alpha_2}$, then $f_{\alpha_1, \beta} = f_{\alpha_2, \beta} \circ z_{\alpha_1,\alpha_2}$ and we have
    \begin{equation}
    \label{eq:transition map beltrami differential}
        \begin{aligned}
           \mu_{\alpha_1}(z) & = \frac{( f_{\alpha_2, \beta} \circ z_{\alpha_1,\alpha_2})_{\bar{z}}}{( f_{\alpha_2, \beta} \circ z_{\alpha_1,\alpha_2})_{z}} = \frac{( f_{\alpha_2, \beta})_z \cdot (z_{\alpha_1,\alpha_2})_{\bar{z}} + ( f_{\alpha_2, \beta})_{\bar{z}} \cdot (\overline{z_{\alpha_1,\alpha_2}})_{\bar{z}}}
           {( f_{\alpha_2, \beta})_z \cdot (z_{\alpha_1,\alpha_2})_{z} + ( f_{\alpha_2, \beta})_{\bar{z}} \cdot (\overline{z_{\alpha_1,\alpha_2}})_{z}} \\
           &= \frac{( f_{\alpha_2, \beta})_{\bar{z}} \cdot (\overline{z_{\alpha_1,\alpha_2}})_{\bar{z}}}{( f_{\alpha_2, \beta})_z \cdot (z_{\alpha_1,\alpha_2})_{z} + ( f_{\alpha_2, \beta})_{\bar{z}} \cdot (\overline{z_{\alpha_1,\alpha_2}})_{z}} \\
           &= 
           \frac{( f_{\alpha_2, \beta})_{\bar{z}} \cdot \overline{(z_{\alpha_1,\alpha_2})_z}}{( f_{\alpha_2, \beta})_z \cdot (z_{\alpha_1,\alpha_2})_{z} + ( f_{\alpha_2, \beta})_{\bar{z}} \cdot \overline{(z_{\alpha_1,\alpha_2})_{\bar{z}}}} \\
           &= \mu_{\alpha_2}(z) \frac{\overline{(\frac{dz_{\alpha_2}}{dz_{\alpha_1}})}}{\frac{dz_{\alpha_2}}{dz_{\alpha_1}}}  
        \end{aligned}
    \end{equation}
    
    Therefore, a quasiconformal map would naturally determine a Beltrami differential on the domain surface, which is invariant under the parameterization of image.

\end{proof}

\begin{lemma} \label{lemma:identity map}
    Let $\mathcal{R} = (U_{\alpha},z_{\alpha})_{\alpha}$ be a Riemann surface and $\mu \in M(\mathcal{R})$, there exists a new atlas of charts $(U_{\alpha},w_{\alpha})_{\alpha}$ of the topological space $\mathcal{R}$ such that the identity map $\text{id} : \mathcal{R} = (U_{\alpha},z_{\alpha})_{\alpha} \rightarrow \mathcal{R}^{\mu} = (U_{\alpha},w_{\alpha})_{\alpha}$ is quasiconformal with Beltrami coefficients $\mu$.
\end{lemma}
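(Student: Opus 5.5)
The plan is to build the new charts locally, one chart at a time, by solving a Beltrami equation in each chart. Then we check that the transition maps between the new charts are conformal, so that $(U_\alpha,w_\alpha)_\alpha$ is a genuine Riemann surface atlas on the same topological space.

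\textbf{Step 1: local solutions.} For each $\alpha$, the component $\mu_\alpha$ is an $L_\infty$ function on $z_\alpha(U_\alpha)$ with $\|\mu_\alpha\|_\infty\le\|\mu\|_\infty<1$. By the Measurable Riemann Mapping Theorem (Theorem \ref{thm: Riemann}), there is a quasiconformal homeomorphism $\phi_\alpha$ of $z_\alpha(U_\alpha)$ onto its image satisfying
\[
(\phi_\alpha)_{\bar z}=\mu_\alpha(\phi_\alpha)_z \quad\text{a.e.}
\]
If one prefers a global solution on $\mathbb{C}$, extend $\mu_\alpha$ by zero outside $z_\alpha(U_\alpha)$ and restrict the resulting map. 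Set $w_\alpha:=\phi_\alpha\circ z_\alpha$. Each $w_\alpha$ is a homeomorphism from $U_\alpha$ onto an open subset of $\mathbb{C}$, since it is a composition of homeomorphisms. By construction, the coordinate expression $w_\alpha\circ \text{id}\circ z_\alpha^{-1}=\phi_\alpha$ has Beltrami coefficient $\mu_\alpha$.

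\textbf{Step 2: holomorphic transitions (the main point).} We must show that $w_{\alpha_2}\circ w_{\alpha_1}^{-1}=\phi_{\alpha_2}\circ z_{\alpha_1,\alpha_2}\circ\phi_{\alpha_1}^{-1}$ is conformal on $w_{\alpha_1}(U_{\alpha_1}\cap U_{\alpha_2})$.

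First, compute the Beltrami coefficient of $g:=\phi_{\alpha_2}\circ z_{\alpha_1,\alpha_2}$. The chain-rule calculation is exactly the one in \eqref{eq:transition map beltrami differential} from Lemma \ref{lemma:4.4}, and it gives
\[
\mu_g=\bigl(\mu_{\alpha_2}\circ z_{\alpha_1,\alpha_2}\bigr)\,\overline{z_{\alpha_1,\alpha_2}'}/z_{\alpha_1,\alpha_2}'.
\]
The compatibility condition in the definition of a Beltrami differential says precisely that this equals $\mu_{\alpha_1}$. Hence $g$ and $\phi_{\alpha_1}$ are two quasiconformal homeomorphisms on the overlap with the same Beltrami coefficient.

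Next, invoke the standard composition fact: if two quasiconformal maps $g$ and $\phi$ share the same Beltrami coefficient almost everywhere, then $h=g\circ\phi^{-1}$ is $1$-quasiconformal. To see this, use the composition formula for Beltrami coefficients:
\[
\mu_{h}\circ\phi=\frac{\mu_g-\mu_\phi}{1-\overline{\mu_\phi}\mu_g}\cdot\frac{(\phi)_z}{\overline{(\phi)_z}}=0 .
\]
Weyl's lemma then says that a $1$-quasiconformal map is conformal. This is the step I expect to require the most care, because regularity is needed: $\phi_{\alpha_1}^{-1}$ is quasiconformal, hence ACL, differentiable a.e., and maps null sets to null sets. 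These properties justify applying the chain rule almost everywhere.

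\textbf{Step 3: conclusion.} Step 2 shows the transitions are conformal, so $(U_\alpha,w_\alpha)_\alpha$ is a holomorphic atlas defining the Riemann surface $\mathcal{R}^\mu$. In the charts $z_\alpha$ and $w_\alpha$, the identity map is represented by $\phi_\alpha$. It is therefore quasiconformal, with local Beltrami coefficients $\mu_\alpha$. By Lemma \ref{lemma:4.4}, these coefficients assemble into the Beltrami differential $\mu$, which completes the proof.
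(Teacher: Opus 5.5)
Your proposal is correct and follows essentially the same route as the paper: set $w_\alpha=\phi_\alpha\circ z_\alpha$ using the Measurable Riemann Mapping Theorem, use the transformation rule of Equation \ref{eq:transition map beltrami differential} together with the compatibility condition to see that $\phi_{\alpha_2}\circ z_{\alpha_1,\alpha_2}$ and $\phi_{\alpha_1}$ have the same coefficient $\mu_{\alpha_1}$ on the overlap, and conclude that the transition maps are conformal. The only difference is in the last step. The paper simply invokes the uniqueness clause of Theorem \ref{thm: Riemann}, whereas you justify it directly via the composition formula and Weyl's lemma. That is the standard proof of the clause, and because it is local it also handles disconnected overlaps without extra comment.
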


\begin{proof}
    By Theorem \ref{thm: Riemann}, there is a quasiconformal homeomorphism $f = f^{\alpha}$ defined on each $z_{\alpha}(U_{\alpha})$ such that $f_{\bar{z}}(z) = \mu_{\alpha}(z)f_{z}(z)$. On each open set $U_{\alpha}$ of the given covering of $\mathcal{R}$, we introduce the new chart $w_{\alpha} = f^{\alpha} \circ z_{\alpha}$.

    First, the transition maps for this new atlas satisfy
    \[
    f_{\alpha \beta} = w_{\alpha} \circ (w_{\beta})^{-1} = f^{\alpha} \circ ( f^{\beta} \circ (z_{\beta} \circ z_{\alpha}^{-1})) ^{-1}
    \]
    by Equation \ref{eq:transition map beltrami differential}, consider $f^{\alpha}$ and $f^{\beta} \circ (z_{\beta} \circ z_{\alpha} ^{-1})$ 
    on the overlapping regions in the $z_{\alpha}(U_{\alpha})$, both are the solutions of the same Beltrami equations (with coefficients $\mu_{\alpha}$. By Theorem \ref{thm: Riemann}, there exists a conformal map $\varphi$ such that $f_{\alpha} = \varphi \circ (f^{\beta} \circ (z_{\beta} \circ z_{\alpha} ^{-1}))$, and thus $f_{\alpha \beta} = \varphi$ is conformal.

    Second, within this new system of charts, the identity map expressed in local coordinates is $w_{\alpha} \circ id \circ z_{\alpha}^{-1} = f^{\alpha}$ would satisfy the Beltrami equation with coefficients $\mu^{\alpha}$. 
\end{proof}

With these, we have the following existence theorem.
\begin{theorem}
\label{thm: beltrami differential correspondence}
    Let $\mathcal{R}$ and $\mathcal{R'}$ be two conformal structures of the unit sphere $\mathbb{S}^2$, then for any $\mu \in M(\mathcal{R})$, there exists a quasiconformal mapping $f : \mathcal{R} \rightarrow \mathcal{R'}$ satisfying the Beltrami equation, which is unique up to the postcomposition of conformal mappings.
\end{theorem}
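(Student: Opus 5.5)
The plan is to combine Lemma \ref{lemma:identity map} with the uniformization theorem, which says the sphere carries a unique conformal structure up to biholomorphism. Given $\mu \in M(\mathcal{R})$, Lemma \ref{lemma:identity map} produces a new atlas $(U_\alpha, w_\alpha)_\alpha$ on the underlying topological sphere. Call the resulting Riemann surface $\mathcal{R}^\mu$. The identity map $\text{id}:\mathcal{R}\to\mathcal{R}^\mu$ is then quasiconformal with Beltrami differential $\mu$. I will first check that $\mathcal{R}^\mu$ is a genuine Riemann surface homeomorphic to $\mathbb{S}^2$. It is compact and simply connected, since only the charts changed and not the topology, and its transition maps are conformal by the first part of that lemma's proof. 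Uniformization then gives a biholomorphism $h:\mathcal{R}^\mu\to\mathcal{R}'$, because $\mathcal{R}'$ is also a conformal structure on $\mathbb{S}^2$ and hence biholomorphic to $\hat{\mathbb{C}}$.

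\textbf{Existence.} Set $f = h\circ \text{id}:\mathcal{R}\to\mathcal{R}'$. In local coordinates we have $v_\beta\circ f\circ z_\alpha^{-1} = (v_\beta\circ h\circ w_\alpha^{-1})\circ(w_\alpha\circ z_\alpha^{-1})$. This is a conformal map post-composed onto a map with Beltrami coefficient $\mu_\alpha$. By the first computation in Lemma \ref{lemma:4.4}, post-composition by a conformal map does not change the Beltrami coefficient. Hence $f$ satisfies the Beltrami equation with coefficient $\mu_\alpha$ in every chart, and Lemma \ref{lemma:4.4} confirms that these local coefficients assemble to exactly $\mu$. The map $f$ is a homeomorphism with $\|\mu\|_\infty<1$, so it is quasiconformal.

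\textbf{Uniqueness.} Suppose $f_1,f_2:\mathcal{R}\to\mathcal{R}'$ both have Beltrami differential $\mu$, and consider $g=f_2\circ f_1^{-1}:\mathcal{R}'\to\mathcal{R}'$. Locally, in a chart $z_\alpha$, both $f_1$ and $f_2$ solve the same Beltrami equation $\phi_{\bar z}=\mu_\alpha\phi_z$ on a domain. By the uniqueness part of Theorem \ref{thm: Riemann}, they differ locally by post-composition with a conformal map. This is the same argument used in the proof of Lemma \ref{lemma:identity map}, and it shows that $g$ is conformal in every chart. Alternatively, one can use the composition formula for Beltrami coefficients: $\mu_{f_2\circ f_1^{-1}}$ vanishes when $\mu_{f_1}=\mu_{f_2}$, and then Weyl's lemma upgrades the $1$-quasiconformal homeomorphism $g$ to a conformal one. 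Therefore $g$ is a conformal automorphism of $\mathcal{R}'$, a Möbius map in the uniformized coordinate, and $f_2=g\circ f_1$.

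\textbf{Main obstacle.} The delicate point is making the uniqueness rigorous globally rather than only chart by chart. The local conformal maps furnished by Theorem \ref{thm: Riemann} must patch into the single map $g$. I would handle this by defining $g$ globally from the start as $f_2\circ f_1^{-1}$, so that only its local regularity needs checking. That regularity follows from the $1$-quasiconformal $\Rightarrow$ conformal (Weyl) step, so no patching is needed. A secondary point is the need to invoke uniformization to identify $\mathcal{R}^\mu$ with $\mathcal{R}'$. This step is where the genus-0 hypothesis enters; for higher genus it fails, since conformal structures have moduli.
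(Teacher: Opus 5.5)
Your existence argument is the paper's own: Lemma \ref{lemma:identity map} produces $\mathcal{R}^\mu$, uniformization gives a biholomorphism $h:\mathcal{R}^\mu\to\mathcal{R}'$, and $h\circ\mathrm{id}$ is the desired map. The paper stops there and gives no uniqueness argument, so your step showing that $f_2\circ f_1^{-1}$ is $1$-quasiconformal, hence conformal by Weyl's lemma, is a correct completion of what the paper leaves implicit rather than a different route.
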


\begin{proof}
    Given $\mu \in M(\mathcal{R})$, by Lemma \ref{lemma:identity map}, there exists a new chart $\mathcal{R}^{\mu}$ such that $id:\mathcal{R} \rightarrow \mathcal{R}^{\mu}$ is quasiconformal w.r.t $\mu$. By uniformization theorem, there exists a conformal mapping $f: \mathcal{R}^{\mu} \rightarrow \mathcal{\mathcal{R}'}$ and it is the desired quasiconformal mapping from $\mathcal{R}$ to $\mathcal{R}$'  
\end{proof}
By Uniformization Theorem, the result is still true in the more general case.
\begin{corollary}
    Let $\mathcal{R}$ and $\mathcal{R'}$ be two genus-0 closed surface, then for any $\mu \in M(\mathcal{R})$, there exists a quasiconformal mapping $f : \mathcal{R} \rightarrow \mathcal{R'}$ satisfying the Beltrami equation, which is unique up to the postcomposition of conformal mappings.
\end{corollary}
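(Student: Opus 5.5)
The plan is to reduce the corollary to Theorem \ref{thm: beltrami differential correspondence} by transporting everything to the unit sphere through uniformizing maps.

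\emph{Step 1: uniformize both surfaces.} Since $\mathcal{R}$ and $\mathcal{R'}$ are closed, genus-0 Riemann surfaces, the Uniformization Theorem gives biholomorphisms $\phi:\mathcal{R}\to\mathbb{S}^2$ and $\phi':\mathcal{R'}\to\mathbb{S}^2$. These induce conformal structures $\mathcal{S}=\phi_*\mathcal{R}$ and $\mathcal{S}'=\phi'_*\mathcal{R'}$ on $\mathbb{S}^2$; both may be taken to be the standard structure.

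\emph{Step 2: transport $\mu$.} Push $\mu$ forward to $\nu=\phi_*\mu\in M(\mathcal{S})$, with charts $z_\alpha\circ\phi^{-1}$ and coefficients $\nu_\alpha(z)=\mu_\alpha(z)$. The compatibility rule in the definition of a Beltrami differential is preserved, because the transition maps are literally the same holomorphic functions. Likewise $\|\nu\|_\infty=\|\mu\|_\infty<1$, so $\nu$ lies in the open unit ball.

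\emph{Step 3: solve on the sphere.} Apply Theorem \ref{thm: beltrami differential correspondence} to $\nu$ to obtain a quasiconformal map $g:\mathcal{S}\to\mathcal{S}'$ with Beltrami differential $\nu$. Then set $f=\phi'^{-1}\circ g\circ\phi$. Pre-composing with the conformal map $\phi$ and post-composing with $\phi'^{-1}$ leaves the local Beltrami coefficients unchanged: the chain-rule computation in Lemma \ref{lemma:4.4} shows that a holomorphic post-composition does not alter $\mu$. For pre-composition, the charts of $\mathcal{R}$ are by construction pulled back isometrically in Step 2. Hence $f$ satisfies the Beltrami equation with coefficient $\mu$.

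\emph{Step 4: uniqueness.} Suppose $f_1,f_2$ both solve the problem. Then $h_i=\phi'\circ f_i\circ\phi^{-1}$ both solve the spherical problem for $\nu$, so by Theorem \ref{thm: beltrami differential correspondence} we have $h_2=c\circ h_1$ for some conformal $c$ of $\mathbb{S}^2$. Consequently $f_2=(\phi'^{-1}\circ c\circ\phi')\circ f_1$, and $\phi'^{-1}\circ c\circ\phi'$ is a conformal automorphism of $\mathcal{R'}$.

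The main obstacle is careful bookkeeping rather than a genuinely hard step. One must check that Beltrami differentials transform correctly under chart pullback, and that invariance holds under conformal pre- and post-composition (the pre-composition case is exactly Equation \ref{eq:transition map beltrami differential}). One must also interpret ``unique up to postcomposition of conformal mappings'' as conformal automorphisms of the target.
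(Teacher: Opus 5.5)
Your proposal is correct and takes the same route as the paper: the paper proves the corollary in one line, by invoking the Uniformization Theorem to reduce to Theorem~\ref{thm: beltrami differential correspondence}. Your Steps 1--4 make explicit the bookkeeping the paper leaves implicit, namely transporting the charts and $\mu$ to the sphere and pulling the uniqueness statement back to $\mathcal{R'}$.
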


Motivated by this, we define a simplified variant of Beltrami differential on the sphere. 
\begin{sloppypar}
\begin{definition}
    A spherical Beltrami differential on $\mathbb{S}^2$ is a pair $\boldsymbol{\mu}_{\mathbb{S}^2} = \{(\mu_N, U_N, P_N), (\mu_S, U_S, P_S)\}$ such that
    \begin{itemize}
        \item $U_N = \mathbb{S}^2 \setminus V_N$ where $V_N \subseteq \mathbb{S}^2$ is a neighborhood of the north pole. $P_N: U_N \to \mathbb{C}$ is the stereographic projection from the north pole.
        \item $U_S = \mathbb{S}^2 \setminus V_S$ where $V_S \subseteq \mathbb{S}^2$ is a neighborhood of the south pole.  $P_S: U_S \to \mathbb{C}$ is the stereographic projection from the south pole.
        \item $U_N \cup U_S = \mathbb{S}^2$.
    
    \item  $\mu_N$ is a Beltrami coefficient defined on $P_N(U_N) \subset \mathbb{C}$ and
    $\mu_S$ is a Beltrami coefficient defined on $P_S(U_S) \subset \mathbb{C}$ such that $\max\{\|\mu_N\|_{\infty},\|\mu_S\|_{\infty}\} < 1$ and on the region $P_S(U_N \cap U_S)$, the coefficients satisfy:
   $$\mu_S(z) = \mu_N(1/z) \cdot (\frac{\bar{z}}{z})^{-2}$$
   where $z \mapsto 1/z$ is the transition function $P_S \circ P_N^{-1}$.
   \end{itemize}

\end{definition}
\end{sloppypar}
\begin{sloppypar}
    Using Theorem \ref{thm: beltrami differential correspondence}, we can establish a one-to-one correspondence between spherical self-quasiconformal maps and spherical Beltrami differentials, modulo the conformal automorphism group.
\end{sloppypar}

\begin{corollary}
\label{corollary: correspondence Theorem of spherical beltrami differential}
    Suppose $\mathcal{M} = \{(U_N, P_N), ( U_S, P_S)\}$ is a conformal structure of spherical surface, then for any spherical Beltrami differential $\boldsymbol{\mu}_{\mathbb{S}^2} = \{(\mu_N, U_N, P_N), (\mu_S, U_S, P_S)\}$, there exists a quasiconformal mapping $f : \mathcal{M} \rightarrow \mathbb{S}^2$ satisfying the Beltrami equation, which is unique up to the postcomposition of conformal mappings.
\end{corollary}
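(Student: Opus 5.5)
The plan is to reduce the statement to Theorem~\ref{thm: beltrami differential correspondence}. I would show that a spherical Beltrami differential $\boldsymbol{\mu}_{\mathbb{S}^2}$ defines an element of $M(\mathcal{M})$, and then take the target conformal structure $\mathcal{R}'$ to be the standard one on $\mathbb{S}^2$.

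\textbf{Step 1: the two charts form a conformal atlas.} The atlas $\mathcal{M}=\{(U_N,P_N),(U_S,P_S)\}$ is a conformal structure on $\mathbb{S}^2$. Indeed, $U_N\cup U_S=\mathbb{S}^2$ by hypothesis, and the only transition map is $P_S\circ P_N^{-1}(z)=1/z$, which is holomorphic on $P_N(U_N\cap U_S)$, a region avoiding $0$ and $\infty$. This is also the standard Riemann sphere structure restricted to these opens, so $\mathcal{M}$ is conformally equivalent to $\hat{\mathbb{C}}$.

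\textbf{Step 2: the coefficients form a Beltrami differential.} I would check that the pair $(\mu_N,\mu_S)$ satisfies the transformation rule in the definition of a Beltrami differential for the transition $w=1/z$. Since $dw/dz=-1/z^2$, we have $\overline{(dw/dz)}/(dw/dz)=z^2/\bar z^2=(\bar z/z)^{-2}$, and the chain-rule computation in Equation~\ref{eq:transition map beltrami differential} yields exactly the cross-chart condition imposed in the definition. This step needs care with which chart's variable $z$ denotes: the condition is written on $P_S(U_N\cap U_S)$ with $\mu_N$ evaluated at $1/z$. I expect the main obstacle to be reconciling this indexing and the direction of the factor with the general definition. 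I would verify it directly by writing $f_S=f_N\circ(1/z)$ and computing $(f_S)_{\bar z}/(f_S)_z$. Since $|(\bar z/z)^{-2}|=1$, we also get $\|\mu\|_\infty=\max\{\|\mu_N\|_\infty,\|\mu_S\|_\infty\}<1$, so $\boldsymbol{\mu}_{\mathbb{S}^2}\in M(\mathcal{M})$.

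\textbf{Step 3: apply the correspondence theorem.} Apply Theorem~\ref{thm: beltrami differential correspondence} with $\mathcal{R}=\mathcal{M}$ and $\mathcal{R}'$ the standard structure of $\mathbb{S}^2$. This gives a quasiconformal $f:\mathcal{M}\to\mathbb{S}^2$ whose chart expressions $w\circ f\circ P_N^{-1}$ and $w\circ f\circ P_S^{-1}$ have Beltrami coefficients $\mu_N$ and $\mu_S$; by Lemma~\ref{lemma:4.4} these are independent of the target chart $w$.

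\textbf{Step 4: uniqueness.} If $f,g$ both solve the equation, then by Lemma~\ref{lemma:4.4} the map $g\circ f^{-1}$ has zero Beltrami differential: composing chart expressions, the coefficients cancel pointwise. Hence $g\circ f^{-1}$ is a $1$-quasiconformal homeomorphism of $\mathbb{S}^2$, which by Weyl's lemma is conformal, i.e.\ a M\"obius transformation. Therefore $f$ is unique up to postcomposition with conformal automorphisms of $\mathbb{S}^2$.
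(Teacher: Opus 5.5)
Your proposal is correct and follows the paper's route: the paper treats this corollary as an immediate consequence of Theorem~\ref{thm: beltrami differential correspondence}, and your Steps 1--3 make explicit what it leaves implicit, namely that the cross-chart condition $\mu_S(z)=\mu_N(1/z)(\bar z/z)^{-2}$ is exactly the Beltrami-differential transformation rule for the transition $1/z$, so $\boldsymbol{\mu}_{\mathbb{S}^2}\in M(\mathcal{M})$. Your Step 4 supplies the uniqueness half, which the paper's proof of that theorem does not argue (it only proves existence); note, though, that the cancellation there does not come from Lemma~\ref{lemma:4.4} but from the composition formula $\mu_{g\circ f^{-1}}\circ f=\frac{\mu_g-\mu_f}{1-\overline{\mu_f}\,\mu_g}\cdot\frac{f_z}{\overline{f_z}}$, together with the fact that quasiconformal maps preserve null sets, and only then does Weyl's lemma apply.
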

Our aim is to achieve spherical self-homeomorphism optimization with explicit control on the local geometric distortion. Considering the correspondence between spherical Beltrami differentials and spherical self-quasiconformal maps, we may formulate the problem as optimization over these geometric quantities. Through stereographic projection, the optimization problem for a spherical surface $\mathcal{M} = \{(U_N, P_N), ( U_S, P_S)\}$ (Equation \ref{eq: formulation of optimization problem}) can be reformulated as the optimization over spherical Beltrami differentials: 
\begin{equation}\label{eq:new formulation of the optimization problem, formulation 1}
\begin{aligned}
\min_{\mu = \mu_{\mathbb{S}^2}} \quad & E_1(f^{\mu}) + E_2(\mu)\\
\textrm{s.t.} \quad & f^{\mu}(\mathbb{S}^2)= P_{N}^{-1} \circ f^{N} \circ P_{N}(U_N) \cup P_{S}^{-1} \circ f^{S} \circ P_{S}(U_S)  \\& (f^{N})_{\bar{z}} = \mu_{N} (f^{N})_{z} \text{ , } (f^{S})_{\bar{z}} = \mu_{S} (f^{S})_{z}  \\
  & \mu_S(z) = \mu_N(1/z) \cdot (\frac{\bar{z}}{z})^{-2} , \quad z \in P_{S}(U_{N} \cap U_{S})  \\
\end{aligned}
\end{equation}

This new formulation includes two energy terms: $E_1$ drives task-specific objectives while $E_2$ regularizes BCs $\mu$ by constraining its magnitude (e.g. minimizing $|\mu|^2$) and smoothness (e.g. minimizing $|\nabla \mu|^2$). 

Due to Corollary \ref{corollary: correspondence Theorem of spherical beltrami differential}, the above formulation is equivalent to 
\begin{equation}\label{eq:new formulation of the optimization problem, formulation 2}
\begin{aligned}
\min_{\mu = \mu_{\mathbb{S}^2}} \quad & E_1(f^{\mu}) + E_2(\mu)\\
\textrm{s.t.} \quad & f^{\mu}:\mathbb{S}^2 \rightarrow \mathbb{S}^2 \text{ is a homeomorphism} \\
    & \mu_N = \mu(f^{\mu}\big|_{U_N}), \mu_S = \mu(f^{\mu}\big|_{U_S})
\end{aligned}
\end{equation}
where $\mu(f\big|_{U})$ is the Beltrami coefficient induced by the restriction of $f$ on $U$.
The Beltrami differential depends only on the parameter domain of the surface parameterization and for simplicity, in application we express the sphere self-mapping $f$ in Equation \ref{eq: formulation of optimization problem} as $f \colon \mathbb{S}^2 = (P_S^{-1}(\overline{\mathbb{D}}),P_S)\cup (P_N^{-1}(\overline{\mathbb{D}}),P_N) \rightarrow \mathbb{S}^2 =(\Omega_1, P_N)\cup (\Omega_2, P_S)$, where $P_S, P_N$ are south, north pole stereographic projections and $\Omega_j$ are simply connected domains in $\mathbb{R}^2$. From this point of view, we can decompose $f$ as the combination of its restriction on the upper and lower hemispheres, respectively. Equivalently, it can be viewed as the combination of two homeomorphisms on different hemispheres, and their images of the equator coincide. Therefore, the formulation is simplified to
\begin{equation}\label{eq:new formulation of the optimization problem, formulation 3}
\begin{aligned}
\min_{\mu_S,\mu_N : \mathbb{D} \rightarrow \mathbb{C}} \quad & E_1(f^{S},f^{N}) + E_2(\mu_S,\mu_N)\\
\textrm{s.t.} \quad & (f^{N})_{\bar{z}} = \mu_{N} (f^{N})_{z} \, , \, (f^{S})_{\bar{z}} = \mu_{S} (f^{S})_{z}\, , z\in \mathbb{D} \\
    & P_S^{-1}\circ f^S(z) = P_N^{-1}\circ f^N(\frac{1}{z})\, ,z\in\partial\mathbb{D}
\end{aligned}
\end{equation}

\subsection{Spectral Beltrami Network}
A key challenge is developing a differentiable tool to compute quasiconformal mappings from Beltrami differentials, enabling gradient-based optimization. Meanwhile, the mapping should be not subject to any boundary constraints because the shape of the target domain $f^{\mu_\alpha} \circ z_{\alpha}(U_{\alpha})$ is usually not prescribed. An important reason why LSQC is preferred in our work is that it implies that geometry of target domain is intrinsically determined by Beltrami coefficients. This characteristic aligns with the previously outlined Beltrami differential optimization formulation \ref{eq:new formulation of the optimization problem, formulation 3} and provides the flexibility to adjust boundary configurations dynamically through the optimization of the Beltrami differential.  

Following the notation in \cite{xu2025neural}, we can view the numerical LSQC algorithm as an operator $\mathcal{F}$. Let $(V,F)$ be a triangulation of the unit disk, $\mu=\{\mu_{T}:T\in F\}$ be piecewise constant Beltrami coefficients, $(p_{i},q_{i})_{i=1,2}$ be the two pairs of selected points in the mesh and their corresponding prescribed destinations in $\mathbb{R}^2$, then the numerical solution $f=\mathcal{F}(\mu, p_1,p_2, q_1,q_2)$ such that $f(p_i)=q_{i}$ and $(\frac{\partial f}{\partial \bar{z}} / \frac{\partial f}{\partial z})\Big|_{T} \approx \mu_{T}$.
For the aim of applying LSQC to gradient-based optimization problems, Spectral Beltrami Network (SBN) is proposed in the previous work\cite{xu2025neural}. Given Beltrami coefficients $\{\mu_v:v \in V^1\}$ defined for vertices on a prescribed mesh of $\mathbb{D}$ (or equivalently a directed graph) $\mathcal{G}^1 = (V^1,E^1,F^1)$ and coordinates of two pinned points $(p_1,p_2)$, SBN predicts the mapping $f(V^1)$ such that $f(V^1) \approx \mathcal{F}(\mu,p_1,p_2,p_1,p_2)$ where $\mu=\{\mu_{T}:\mu_{T}=\frac{1}{3}\sum_{v\in T}\mu_v, T \in F^1\}$. Arbitrary target positions $q_1,q_2$ can be recovered by post-composition with a similarity transformation $g$ satisfying $g(p_j)=q_{j}$ $(j=1,2)$. By Proposition \ref{prop: independent of similarity transformation}, $\mathcal{F}(\mu,p_1,p_2,q_1,q_2)=g\circ \mathcal{F}(\mu,p_1,p_2,p_1,p_2)$, ensuring no loss of generality. For capturing short-range and long-range information exchange between distinct vertices in the mesh, SBN couples mesh spectral layers with the multiscale message passing mechanism. The mesh spectral layer is to project the latent feature vectors into a span of low-frequency mesh Laplacian eigenvectors, and do the multi-node linear transformation in the frequency domain followed by an inverse projection back to the spatial domain, through which global shape context is injected to the latent space. In addition, the multiscale mesh hierarchy can widen the receptive field without exploding memory, improving the efficiency of information exchange through downsampling and upsampling. Figure \ref{fig: SBN} shows the architecture of SBN. For more details about the network, please refer to \cite{xu2025neural}.
The key contribution of SBN is that 1. through it, the Beltrami field becomes learnable and 2. with the stop-gradient operator, the choice of pinned points is no longer a discrete manner but also differentiable.
\begin{figure}[!htbp]
    \centering
    \includegraphics[width=\textwidth]{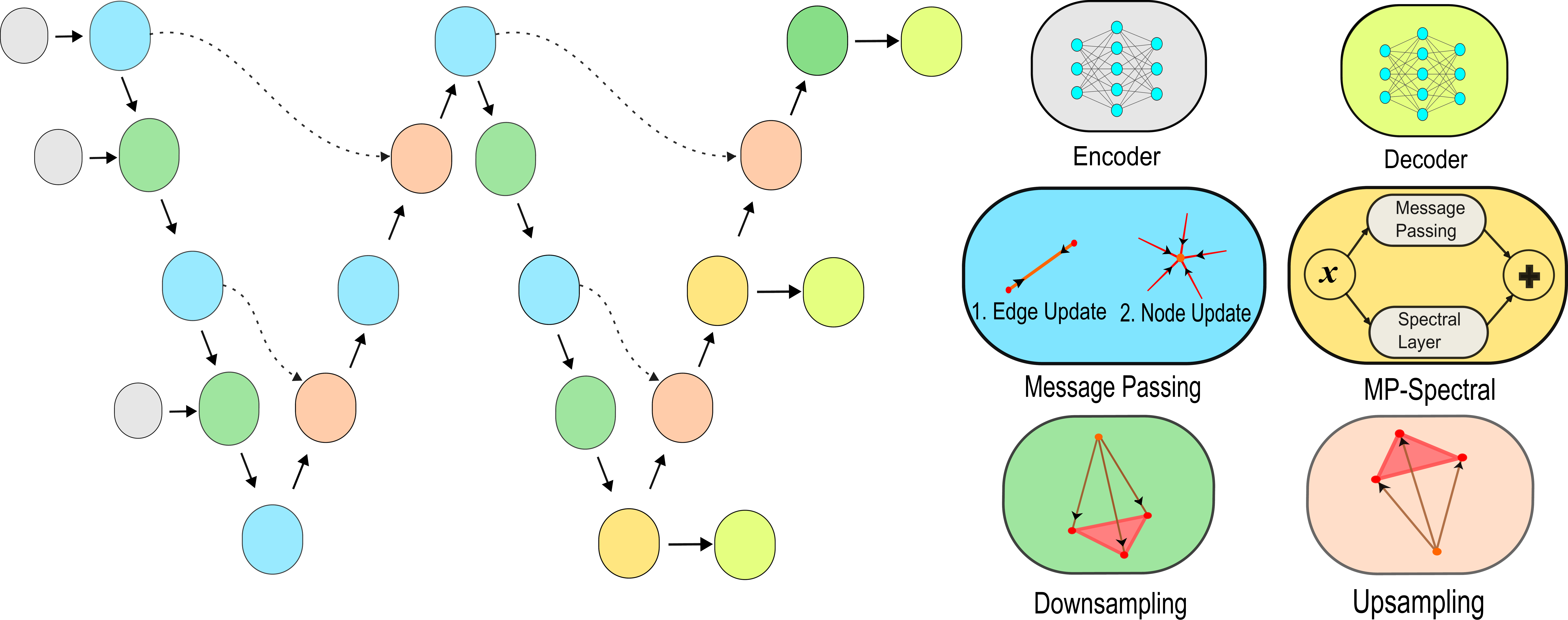}
    \caption{ \small \textit{
        Architecture of Spectral Beltrami Network}
    }
    \label{fig: SBN}
\end{figure}
\section{Beltrami Optimization On Spherical Topology (BOOST)}
\label{sec: BOOST}
The Spectral Beltrami Network (SBN) can predict free-boundary homeomorphisms on the unit disk $\mathbb{D}$. In the following, we introduce the optimization framework BOOST for the general spherical parameterization problem, which incorporates SBN for freely deforming the 2D parameterization domain.

After the training of SBN, we freeze its parameters. The well-trained SBN model serves as a tool for spherical surface parameterization, transforming a hemisphere into a submanifold of the unit sphere. A significant advantage of the SBN is that the pinned points $P = \{p_1, p_2\}$ are treated as trainable parameters rather than fixed constraints, using the stop gradient operator. By optimizing the Beltrami coefficients (BCs) and the coordinates of two fixed points, followed by post-composition with scaling, rotation, and translation, we derive optimal mappings associated to specific tasks.

To ensure valid inputs, the input BCs and fixed points must have norms less than 1. Let $\tilde{\mu}_v$, $T_{\text{BC}}$, $\tilde{p}_i$ and $T_{\text{pin}}$ be the parameters to be optimized, then define an activation function $\mathcal{T}$
\[
\mathcal{T}(x,T) = \left( \frac{e^{\frac{|x|}{T}} - e^{-\frac{|x|}{T}}}{e^{\frac{|x|}{T}} + e^{-\frac{|x|}{T}}} \right)e^{i\arg(x)}
\]
and $\mu_v = \mathcal{T}(\tilde{\mu}_v,T_{\text{BC}})$ and $p_i = \mathcal{T}(\tilde{p}_i, T_{\text{pin}})$ are the input to the model.

Now, we propose an optimization framework for spherical mapping problems, which jointly optimizes the mappings of both hemispheres via SBN,  \textbf{B}eltrami \textbf{O}ptimization \textbf{O}n \textbf{S}pherical \textbf{T}opology (abbreviated as \textbf{BOOST}).

Let $\mathcal{G} = (V^1,F^1)$ be the standard mesh where SBN $\mathcal{F}_{\theta}$ works, and $\mathcal{F}_{\theta}(\mu, p_1, p_2)$ be the structured output of SBN given $\mu=\{ \mu_v \colon v\in V^1\}$ and two fixed points $p_1, p_2$. 

Given a spherical surface mesh $S$, we divide it into two hemispheres, upper $S_{upp}$ and lower ones $S_{low}$ based on the signs of $z$-coordinates. Apply south ($P_S$) and north ($P_N$) pole stereographic projection to the $S_{upp}$ and $S_{low}$, respectively, we obtain  2D triangular mesh $D_{upp} \in \mathbb{R}^{|V_{upp}| \times 2}$ and $D_{low} \in \mathbb{R}^{|V_{low}| \times 2}$ within the unit disk $\mathbb{D}$. SBN, operating on a standard mesh $\mathcal{G} = (V^1,F^1)$, predicts mapping $f^S=\mathcal{F}_{\theta}(\mu_{S},p_{S,1}, p_{S,2})$ and $f^N=\mathcal{F}_{\theta}(\mu_{N},p_{N,1}, p_{N,2})$. With barycentric interpolation, we can obtain deformation of $D_{upp}$ and $D_{low}$, and finally use inverse projection $P_S^{-1}, P_{N}^{-1}$ to reconstruct the hemispheres. Sometimes, for simplicity, we may use spherical barycentric interpolation as an alternative. Essentially, the key point is to find a smooth and bijective sphere self-mapping that works on the standard spherical mesh $P_{S}^{-1}(V^1) \cup P_{N}^{-1}(V^1 \backslash \partial V^1)$ such that the interpolated result minimizes the loss objective.

The optimization needs to ensure: (1) no overlaps when gluing the hemispheres, and (2) minimal boundary mismatch. Specifically, there must be no overlaps around the boundaries of two transformed hemispheres when gluing back, meanwhile the distance between $P_S^{-1}(f^S(v))$ and $P_N^{-1}(f^N(v))$ for each $v\in \partial V^1$ should be small enough, ideally zero. For the second goal, we propose a boundary matching loss
\[
\mathcal{L}_{\text{bm}}(f^S, f^N) = \frac{1}{|\partial V^1|} \sum\limits_{v \in V^1}\| P_S^{-1}(f^S(v)) - P_N^{-1}(f^N(v)) \|_2^{2}
\]
To achieve our first aim of guaranteeing non-overlapping spherical parameterizations when gluing two deformed hemispheres, we employ a folding detection and penalization strategy based on consistent face orientation. Given the initial non-overlapping spherical parameterization $\mathbf{x}$, we first establish a consistent orientation for all triangular faces in the mesh. For each face $T$ with vertices indexed as $\{i,j,k\}$, we compute the unnormalized face normal vector $\mathbf{n}_T$ via the cross product of edge vector: $\mathbf{n}_T = ({\mathbf{x}}_{i} - {\mathbf{x}}_{k}) \times ({\mathbf{x}}_{j} - {\mathbf{x}}_{i})$. The orientation is correct if $\mathbf{n}_T \cdot \mathbf{x}_i > 0$, indicating that the normal vector points outward from the sphere's center. For faces where this condition is violated, we swap the vertex indices $j$ and $k$ to reverse the winding order, thereby ensuring all faces maintain a consistent outward-pointing orientation. With consistent orientation established, any subsequent spherical mapping can be validated for self-intersections or folding by computing the signed area of each triangular face. The signed area $A_T^s$ for an oriented face $T=(i,j,k)$ is computed as: $A_T^s = \frac{1}{2} \|\mathbf{n}_T\|_2 \cdot \text{sign}(\mathbf{n}_T \cdot \mathbf{x}_i)$.
Under the corrected orientation, faces maintaining positive signed areas ($A_T^s > 0$) iff they preserve the non-overlapping property. To explicitly discourage folding during optimization, we incorporate a folding penalty term into our loss function:
$$\mathcal{L}_{\text{folding}} = \frac{1}{|F^1|} \sum_{T \in F^1} \max(0, -A_T^s)$$
The folding penalty $\mathcal{L}_{\text{folding}}$ is assigned a higher weight than other regularization terms, because bijectivity is the fundamental topological constraint rather than a soft geometric preference. Similar folding penalty terms have been employed in various spherical mapping works as auxiliary losses to discourage self-intersections such as \cite{ren2024sugar}. Notably, our two-hemisphere gluing strategy has a significant advantage in achieving $\mathcal{L}_{\text{folding}} = 0$ compared to existing approaches. Since each hemisphere is predicted by SBN which always inherently produces homeomorphic mappings, face flipping often occurs exclusively at the deformed equator where the two charts are overlapped. The majority of faces maintain correct orientation by construction. In contrast, others must regularize all faces across the entire sphere, as inversions can occur anywhere during optimization. Therefore, it is easier to remain the bijectivity of the result mapping in our works than others and empirically final mappings by BOOST have no foldings.


Next, replace $P_N^{-1}(f^N(\partial V^1))$ in $P_N^{-1}(f^N(V^1))$ by $P_S^{-1}(f^S(\partial V^1))$, and denote this new image by $\widetilde{\mathcal{M}}$. The last thing is to ensure the smoothness of the boundary of the structured mesh $\widetilde{\mathcal{M}}$, because its boundary points and inner points are from distinct outputs.
Let $L_1$ be the circle of points inside $f^S(V^1)$ which is adjacent to the boundary points $f^S(\partial V^1)$ and $L_2$ be the circle of points inside $f^N(V^1)$ which is adjacent to the boundary points $f^N(\partial V^1)$, we define a new mesh $\mathcal{N}$ by combining $P_N(\widetilde{\mathcal{M}})$ with $P_N \circ P_S^{-1}(L_1)$. By the construction of $\mathcal{N}$, we know $P_N \circ P_S^{-1}(f^{S}(\partial V^1))$ and $P_N \circ P_S^{-1}(L_1)$ are results from the upper hemisphere, $L_2$ are results from the lower hemisphere, so for the ring of a point $v \in P_N \circ P_S^{-1}(f^{S}(\partial V^1))$, all of its adjacent faces are not output from the same hemisphere. Since we deform each hemisphere independently, the mapped positions of the one-ring neighborhood of these points in the parameter space may not align consistently. Hence we impose a boundary smoothness loss on the Laplacian and double Laplacian of the boundary points.
\[
\mathcal{L}_{\text{bs}} = \sum\limits_{v \in P_N \circ P_S^{-1}(f^{S}(\partial V^1))}  \|\Delta^2 v \|_{2}^2+ 0.1 \cdot  \|\Delta v \|_{2}^2
\]
where $\Delta v_i = \sum_{[i,j]}\omega_{ij}(v_j-v_i)$ and $\omega_{ij} = \frac{1}{2}(\cot \alpha_{ij}+\cot\beta_{ij})$, $\alpha_{ij}$ and $\beta_{ij}$ are the angles opposite to the edge $[i,j]$, and $\Delta^2$ is the standard scalar Laplacian applied twice. The double Laplacian term distributes the derivative change smoothly into the interior, preventing tension discontinuities. An illustration of $\mathcal{L}_{\text{bm}},\mathcal{L}_{\text{folding}}$ and $\mathcal{L}_{\text{bs}}$ is given in Figure \ref{fig:illustration of auxiliary loss}.
\begin{figure}
    \centering
    \includegraphics[width=\linewidth]{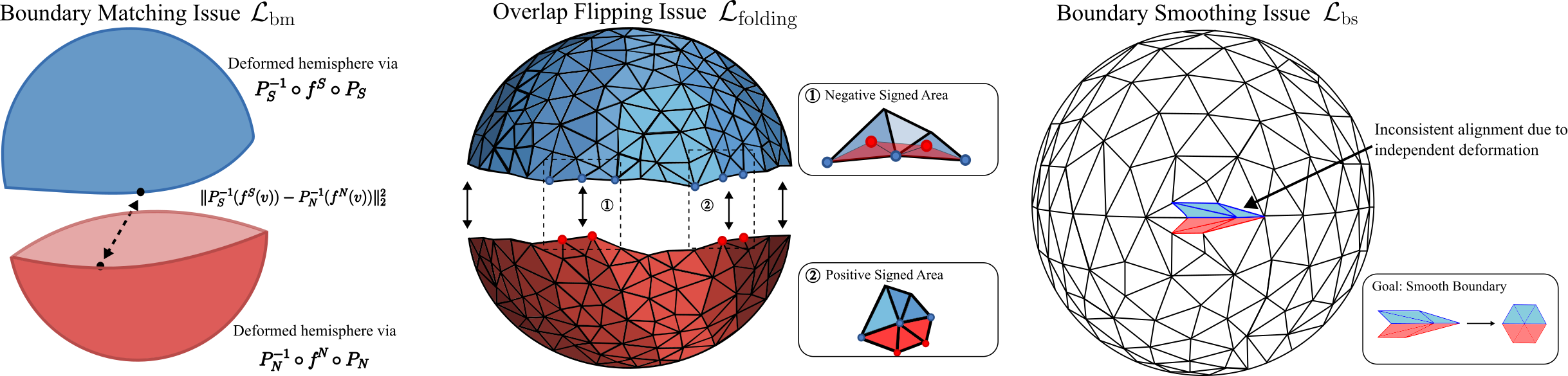}
    \caption{\small \textit{Illustration of three issues addressed by the loss functions $\mathcal{L}_{\text{bm}},\mathcal{L}_{\text{folding}}$ and $\mathcal{L}_{\text{bs}}$}}
    \label{fig:illustration of auxiliary loss}
\end{figure}

In all, let us denote the post-composition of similarity transformation by $g(x,\phi,\varphi,r) = \varphi e^{i\phi}x+r$ and after training of SBN is done, we freeze its parameters and obtain $\mathcal{F}_{\theta^*}$, whose output $\mathcal{F}_{\theta^*}(\{\mu_v\},p_1,p_2)\approx \mathcal{F}(\{\mu_{F}\},p_1,p_2,p_1,p_2)$. And we can reformulate the framework \ref{eq:new formulation of the optimization problem, formulation 3} and obtain BOOST as follows
\begin{equation}
\begin{aligned}
\min_{(T_{\text{BC}}, T_{\text{pin}},\theta_N,\theta_S)}\quad & \lambda_{\text{task}}\mathcal{L}_{\text{task}}(f^{S},f^{N}) + \lambda_{\text{reg}}\mathcal{L}_{\text{reg}}(\theta_S,\theta_N) \\&+ \big(\lambda_{\text{bm}}\mathcal{L}_{\text{bm}} + \lambda_{\text{bs}}\mathcal{L}_{\text{bs}} + \lambda_{\text{folding}}\mathcal{L}_{\text{folding}}\big)(f^S,f^N)\\
\end{aligned}
\end{equation}
where $\theta_N = \{\tilde{\mu}_N,T_{\text{BC},N},\tilde{p}_{N,1},\tilde{p}_{N,2}, T_{\text{pin},N}, \phi_N,s_N,r_N \}$ and $\theta_S = \{\tilde{\mu}_S,T_{\text{BC},S},\tilde{p}_{S,1},\tilde{p}_{S,2},T_{\text{pin},S},$ $\phi_S,\varphi_S,r_S \}$ are the parameter sets and 
\begin{align*}
    f^S & = g \big( \mathcal{F}_{\theta^*}(\mu_S,p_{S,1},p_{S,2}),\phi_S,\varphi_S,r_S \big), \\
    f^N &= g \big( \mathcal{F}_{\theta^*}(\mu_N,p_{N,1},p_{N,2}),\phi_N,\varphi_N,r_N \big).
\end{align*}
Last, the regularization term $\mathcal{L}_{\text{reg}}$ is designed to discourage local geometric distortion and enhance the smoothness of the mapping. Therefore, two regularization terms are always used in $\mathcal{L}_{\text{reg}}$, defined as 
\begin{align*}
    \mathcal{L}_{\text{BC}} &= \frac{1}{|V^1|}\sum\limits_{v\in V^1}\left( |\mu_{S}(v)|^2 +|\mu_{N}(v)|^2\right) \\
    \mathcal{L}_{\text{smooth}} = & \frac{1}{|F^1|}\sum\limits_{T\in F^1}\left( |\nabla \mu_{S}(T)|^2 + |\nabla \mu_{N}(T)|^2 \right) 
\end{align*}
The details discussed are summarized in Algorithm \ref{algo: joint lsqc}.
\begin{algorithm}[htb]
    \caption{Beltrami Optimization On Spherical Topology}
    \label{algo: joint lsqc}
    \begin{algorithmic}[1]
        \Require
            Data: A triangular mesh of unit sphere $\mathbb{S}^2$.
            \begin{enumerate}
            \item optimization parameters $\eta$: $\theta_N = \{\tilde{\mu}_N,T_{\text{BC},N},\tilde{p}_{N,1},\tilde{p}_{N,2}, T_{\text{pin},N}, \phi_N,s_N,r_N \}$ and $\theta_S = \{\tilde{\mu}_S,T_{\text{BC},S},\tilde{p}_{S,1},\tilde{p}_{S,2},T_{\text{pin},S},$ $\phi_S,\varphi_S,r_S \}$, 
            \item loss weights $\lambda_j$, loss functions $\mathcal{L}_{j}$, optimizer $\Psi$
            \end{enumerate}
        
        \Ensure
            transformed spherical mesh $\tilde{S}$
        \State 
            Freeze the parameters of SBN $\mathcal{F}_{\theta^*}$.
       
        \State 
            $continue$ = $True$;
        \While{ $continue$ }
            \State
            Prepare input $\mu = \mathcal{T}(\tilde{\mu},T_{\text{BC}}), p_i = \mathcal{T}(\tilde{p}_i,T_{\text{pin}})$ for the model $\mathcal{F}$.
            \State
            Compute two mappings predicted by model, $f^S =g \big( \mathcal{F}_{\theta^*}(\mu_S,p_{S,1},p_{S,2}),\phi_S,\varphi_S,r_S \big)$ and 
            $f^N = g \big( \mathcal{F}_{\theta^*}(\mu_N,p_{N,1},p_{N,2}),\phi_N,\varphi_N,r_N \big)$.
            \State Obtain the deformed spherical surface mesh $\tilde{S}$ via interpolation on $f^S,f^N$.
            \State
            Compute the weighted sum $\mathcal{L}_{\text{total}}$ of losses $\mathcal{L}_{\text{task}}(\tilde{S})$, $\mathcal{L}_{\text{reg}}(\theta_S,\theta_N)$, $\mathcal{L}_{\text{bm}}$, $\mathcal{L}_{\text{bs}}$, $ \mathcal{L}_{\text{folding}}$ and $\mathcal{L}_{\text{smooth}}$ with weights $\lambda_j$.
    
            \If{$\mathcal{L}_{\text{total}}$ and each individual loss meet stop criterion}  
                \State
                    $continue$ = $False$; 
            \Else
                \State 
                Compute $\nabla_{\eta}\mathcal{L}_{total}$ and update all $\eta$ with the optimizer $\Psi$.
            \EndIf
        \EndWhile
    \end{algorithmic}
\end{algorithm}

\section{Experiments and Applications}
\label{sec: experiments}
Numerical experiments were conducted to evaluate the efficacy of our framework BOOST, which comprises several applications in landmark, intensity, and hybrid registration problems, encompassing both realistic and synthetic scenarios. For the training detail of the SBN model, please refer to \cite{xu2025neural}.
In downstream applications, all experiments related to BOOST were conducted on an NVIDIA A40 GPU with 44GB memory, ensuring efficient computation, and the numerical algorithms were implemented using MATLAB on an 8-core Intel i7 machine.
\subsection{Experiments}
We first tested our methods on two synthetic experiments.
In the first test, the method was applied to a hybrid registration task requiring a bijective deformation that morphs an 'I'-shaped region on the sphere into a 'C' shape, while matching six pairs of corresponding landmarks and the entire image intensity. As shown in Figure \ref{fig:deform_I_to_C}, the deformed image very closely resembles the target image, landmarks are consistently matched and the obtained registration remains bijective.

We further challenged our method by conducting a landmark-matching experiment involving extreme distortions. Specifically, we randomly selected $2n \ (n = 2,3,4)$  groups of landmarks $(A_{i},B_i,C_i,D_i)_{i=1,2,...,2n}$, distributed evenly on both hemispheres. Each group of landmarks is constrained that adjacent points were mapped to each other, thereby constructing $2n$ large-scale twists on the spherical surface. The top-left picture of Figure \ref{fig:extreme distortion} displays representative pairs of landmarks, while the task loss in this setup is defined as
\begin{equation*}
    \mathcal{L}_{task}(f) =\frac{1}{8n}\sum\limits_{i=1}^{n}(|f(A_i)-B_i|^2 +|f(B_i)-C_i|^2+|f(C_i)-D_i|^2+|f(D_i)-A_i|^2) .
\end{equation*}
The Top-right picture of Figure \ref{fig:extreme distortion} shows the final registration grid for a case that $n=4$ and the average L2 squared error of each pair is decreased to $7.96 \times 10^{-4}$  after optimization, and the second row is the Beltrami distribution of the result mapping. As demonstrated by the qualitative and quantitative results, SBN is highly adept at dealing with tasks requiring extremely large deformation mappings: bijectivity is rigorously preserved since all Beltrami coefficients retain a norm less than 1 throughout, while smoothness and conformality are maintained by constraining both $|\mu|$ and $|\nabla \mu|$. Thus, even under extreme landmark-driven warping, regions outside the immediate deformation zones remain minimally distorted.
\begin{figure}
    \centering
    \includegraphics[width=\linewidth]{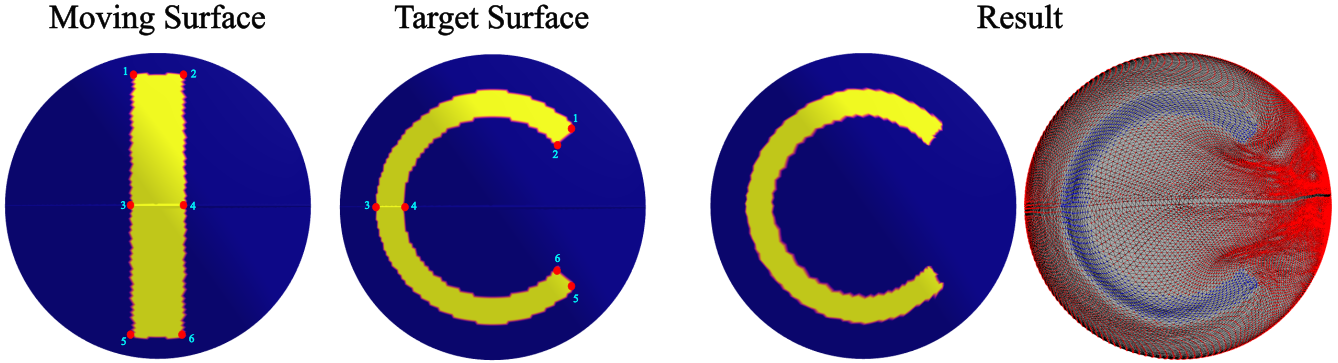}
    \caption{\small \textit{Deformation from an 'I'-shape region to a 'C'-shape region. The moving surface (showing a vertical yellow strip) is transformed to align with the target surface (displaying a C-shaped yellow region). The third panel shows the successfully registered result after applying the transformation. The rightmost visualization presents the registration grid.} }
    \label{fig:deform_I_to_C}
\end{figure}
\begin{figure}
    \centering
    \includegraphics[width=0.8\linewidth]{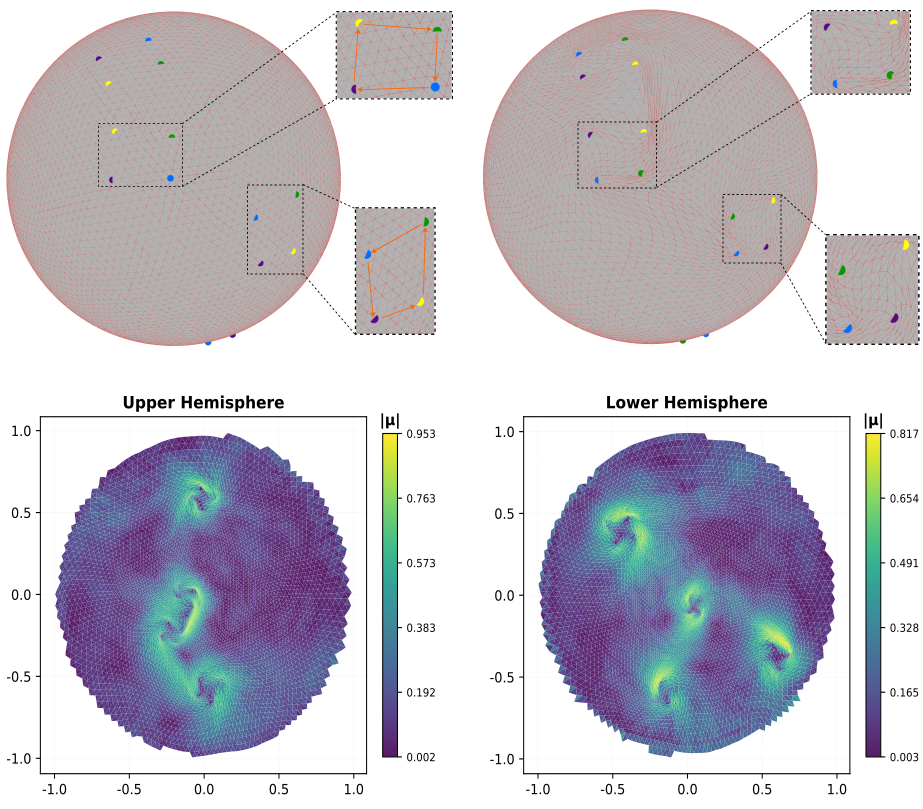}
    \caption{\small \textit{Results for the extreme distortion experiment. Top row: Landmarks to be matched, forming large-scale twists. Middle row: Deformation result of the surface after matching. Bottom row: Mapping results shown for both hemispheres, with color indicating the magnitude of the Beltrami coefficient.}}
    \label{fig:extreme distortion}
\end{figure}

\subsection{Applications to brain surface registration}
\label{subsec:cortical surface registration}
Cortical surface registration plays a crucial role in aligning functional and anatomical features across individuals. Due to its requirement on the high registration accuracy, preserving topology, and minimizing registration distortion, the problem is challenging.
To demonstrate the effectiveness of our framework, we conducted a series of experiments focusing on the registration between spherical parameterizations of cortical surfaces. 

We tested Algorithm \ref{algo: joint lsqc} by performing a landmark alignment task on the spherical parameterization of right cortical surface data. We followed the experiment setup from FLASH \cite{choi2015flash}. On each cortical surface, there are six landmark curves, including Central Sulcus (CS), Inferior Frontal Sulcus (IFS), Superior Frontal Sulcus (SFS), Inferior Temporal Sulcus (ITS), Superior Temporal Sulcus (STS) and Postcentral Sulcus (PostCS). We first used FLASH \cite{choi2015flash} to obtain a spherical conformal parameterization of each cortical surface and aligned landmarks on the pair of spherical cortical surfaces. 
Mathematically, denote six landmark curves on the spherical parameterization of two cortical surfaces by $S_{1,i}, S_{2,i} (i=1,\ldots,6)$ and similarity measure functions of landmark pairs to be matched by $d_{sim}$. Our goal is to find $f \colon \mathbb{S}^{2} \rightarrow \mathbb{S}^2$ such that 

\begin{enumerate}
    \item minimizing $d_{sim}(f(S_{1,i}),S_{2,i})$ for all $i=1.\ldots, 6$;
    \item $f$ is bijective;
    \item $f$ is geometrically regularized.
\end{enumerate}

The goal is formulated as finding an optimal mapping $f$ minimizing an energy functional
\[
\mathcal{L}_{\text{lm}}(f) = \mathcal{L}_{\text{task}}(f) + \mathcal{L}_{\text{reg}}(f),
\]
where 
\[
\mathcal{L}_{\text{task}}(f) = \frac{1}{6}\sum\limits_{i=1}^{6}d_{\text{sim}}(f(S_{1,i}), S_{2,i}).
\]

For the regularization terms involved in included but not limited to this experiment, we usually use a weighted sum of $\mathcal{L}_{\text{bm}}, \mathcal{L}_{\text{folding}}, \mathcal{L}_{\text{bs}}, \mathcal{L}_{\text{BC}}$ and $\mathcal{L}_{\text{smooth}}$.

In prior work \cite{choi2015flash}, $d_{sim}$ is nodewise L2 square distance and this is indispensable for them to linearize the optimization problem via the variational method. To satisfy this, Choi et al. \cite{choi2015flash} used resampling to ensure that the number of nodes in the corresponding curves is the same. We applied the same resampling operation, followed by a rotation to minimize the average landmark L2 distance. The registration was then performed using our SBN-based framework, and the results were compared with those of FLASH\cite{choi2015flash} in terms of both angle distortion and landmark matching error. The hyperparameter $\lambda$ in FLASH was set to 3 and the weights of $\mathcal{L}_{\text{task}}, \mathcal{L}_{\text{bm}}, \mathcal{L}_{\text{folding}}, \mathcal{L}_{\text{bs}}, \mathcal{L}_{\text{BC}}$ and $\mathcal{L}_{\text{smooth}}$ were 5, 1, 20, 0.5, 0.1 and 0.01, respectively.
Numerical results show that our method achieves competitive, or even superior, landmark matching accuracy meanwhile always yielding consistently lower angle distortions. Figure \ref{fig:small_deformation_move32_temp35} and Figure \ref{fig:bc_comparison_move32_temp35} compare the performance of different methods in an example of registering brain 32 to brain 35, in terms of visual effect and statistics of induced Beltrami coefficients. See Table \ref{tab:comparison in small deformation case} for a more comprehensive comparison. 
\begin{figure}[!tbp]
    \centering
    \includegraphics[width=\linewidth]{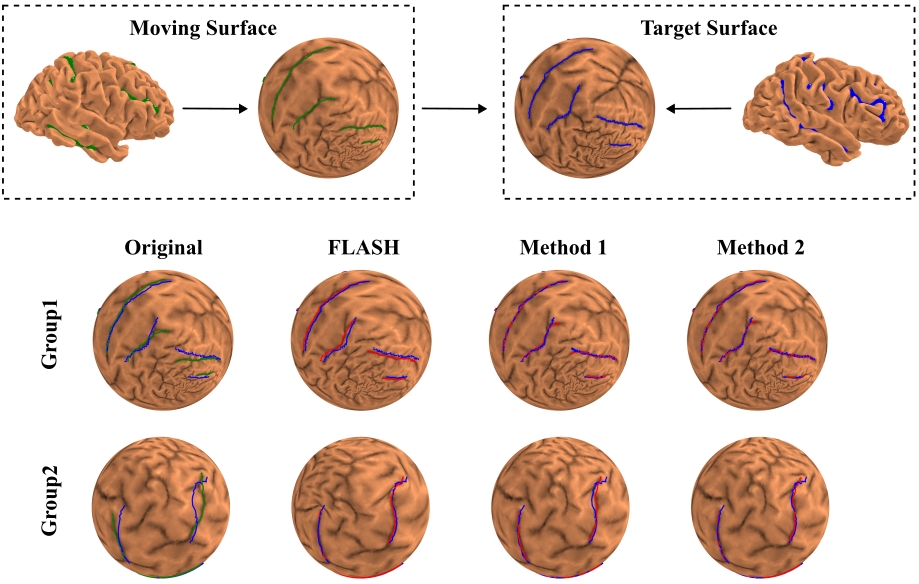}
    \caption{\small \textit{Registration results for brain 32 to brain 35 using six sulcal landmarks. Landmarks are grouped (Group 1: CS, ITS, STS, postCS; Group 2: IFS, SFS) for visual clarity. Columns, from left to right: source/target landmark pairs (green: moving surface, blue: target surface), results from FLASH\cite{choi2015flash}, results of our method with resampling, and results of our method with chamfer distance. Red curves are the deformed sulci, and blue ones shows target sulci.}}
    \label{fig:small_deformation_move32_temp35}
\end{figure}
\begin{figure}[!htbp]
    \centering
    \includegraphics[width=\linewidth]{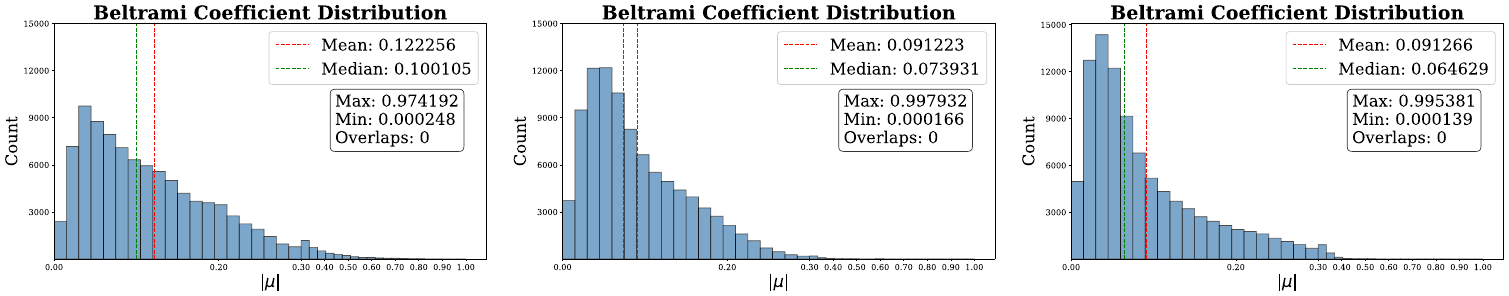}
    \caption{\small \textit{Comparison of induced angle distortion for three methods; FLASH\cite{choi2015flash} (Left), our method with landmark resampling (Middle), and our method using chamfer distance (Right), represented by distributions of Beltrami coefficients associated with deformation.}}
    \label{fig:bc_comparison_move32_temp35}
\end{figure}
\begin{table}[!htbp]
  \centering
  \resizebox{\linewidth}{!}{
  \begin{tabular}{|c|c|c|c|c|c|c|c|c|c|c|c|c|}
  \hline
  \multirow{4}{*}{\textbf{surface}} & \multicolumn{4}{c|}{\textbf{FLASH}\cite{choi2015flash}} & \multicolumn{4}{c|}{\textbf{method 1}} & \multicolumn{4}{c|}{\textbf{method 2}} \\
  \cline{2-13}
   & Number  &  Mean  & Mean  & Mean  & Number  &  Mean  & Mean  & Mean & Number  &  Mean  & Mean  & Mean \\
   & of over- & \multirow{2}{*}{$|\mu|$} & Square  & chamfer & of over- & \multirow{2}{*}{$|\mu|$} & Square  & chamfer& of over- & \multirow{2}{*}{$|\mu|$} & Square  & chamfer\\
   &laps & & distance & distance &laps & & distance & distance &laps & & distance & distance \\
    \hline
    Brains 35 and 32 & \textbf{0} & 0.123 & 0.0018  & 0.0014  & \textbf{0} & \textbf{0.09122} & \textbf{0.0007} & 0.0009  & \textbf{0} & 0.09126  & 0.0027 & \textbf{0.0007} \\
    \hline
    Brains 29 and 13 & 44 & 0.296 & 0.0034 & 0.0030  & \textbf{0} & \textbf{0.091} & \textbf{0.0018} & 0.0012  & \textbf{0} & 0.098  & 0.0056 & \textbf{0.0004} \\
    \hline
    Brains 18 and 8 & 107 & 0.307 & 0.0022 & 0.0029 &  \textbf{0} & 0.096 & \textbf{0.0011} & 0.0012  & \textbf{0} & \textbf{0.074}  & 0.0117 & \textbf{0.0007} \\
    \hline
    Brains 21 and 39  & 128 & 0.366 & 0.0411 & 0.0632  & \textbf{0} & 0.158 & \textbf{0.0018} & 0.0020  & \textbf{0} & \textbf{0.144}  & 0.0061 & \textbf{0.0009} \\
    \hline
    Brains 10 and 9  & 157 & 0.387 & 0.0048 & 0.0037  & \textbf{0} & 0.118 & \textbf{0.0015} & 0.0014  & \textbf{0} & \textbf{0.113}  & 0.0056 & \textbf{0.0003} \\
    \hline
  \end{tabular}
  }
  \caption{\small \textit{Comparison of landmark matching performance and angle distortion among FLASH\cite{choi2015flash} and BOOST (with two variations: method 1—with landmark resampling; method 2—using chamfer distance as loss, without landmark resampling). For each pair of brains, the table provides the number of overlap failures, average $|\mu|$, mean squared distance, and mean chamfer distance. Lower values are better in all columns.}}
  \label{tab:comparison in small deformation case}
\end{table}

Moreover, in both experimental and real-world brain landmark matching tasks, it is impractical to require that corresponding curves from different brains have the same number of points. Curves like the Central Sulcus of different brains may be represented by different numbers of sampled points due to variations in brain size, shape, or imaging resolution. This discrepancy poses a significant challenge for traditional alignment methods that rely on point-to-point comparisons. Additionally, if the same curves in two subjects indeed have very different lengths, as the following large deformation case indicates, these resampling methods may not accurately capture true anatomical correspondence and the traditional methods may fail.

An advantage of our framework is its flexibility to incorporate with any differentiable loss functions. By treating each pair of curves to be matched as two point clouds, we can adopt the chamfer distance as the loss function. Let $S_1$ and $S_2$ be two point clouds, and their chamfer distance is defined by
 \begin{equation*}
      d_{CD}(S_1,S_2) = \frac{1}{|S_1|}\sum\limits_{y \in S_1} \min\limits_{x \in S_2} \|x-y\|^2 + \frac{1}{|S_2|}\sum\limits_{x \in S_2} \min\limits_{y \in S_1} \|y-x\|^2.
 \end{equation*}
For better performance, we also required that two endpoints of each curve on the moving surface can be matched with those of its target. Suppose $q_{1,i}, q_{2,i}$ are endpoints of $i$-th curves in $S_1$ and $t_{1,i}, t_{2,i}$ are their targets, then 
\[
\mathcal{L}_{task}(f) = \frac{1}{6}\sum\limits_{i=1}^{6}d_{CD}(f(S_{1,i}), S_{2,i}) + \frac{1}{12}\sum\limits_{i=1}^{6}\sum\limits_{j=1}^{2}|q_{j,i}-t_{j,i}|^{2}.
\]
The weights of $\mathcal{L}_{\text{task}}, \mathcal{L}_{\text{bm}}, \mathcal{L}_{\text{folding}}, \mathcal{L}_{\text{bs}}, \mathcal{L}_{\text{BC}}$ and $\mathcal{L}_{\text{smooth}}$ in this method were the same as before.
For clarity, we denote the former application of our framework by method 1, and this by method 2. A result is shown in the rightmost column of Figure \ref{fig:small_deformation_move32_temp35}. As shown in Table \ref{tab:comparison in small deformation case}, replacing the L2 nodewise distance by the sum of the chamfer distance and endpoint loss consistently achieves visually better landmark matching performance and smaller angle distortion than FLASH \cite{choi2015flash}.

In addition, even when the initial landmark distortion is small, FLASH \cite{choi2015flash} may fail to achieve a bijective deformation when imposing too many landmark constraints, whereas our method remains robust. For example, when registering brain 8 (moving surface) to brain 18 (template surface), FLASH is able to handle registering ITS and STS, or CS, SFS, IFS, and postCS separately, but cannot provide a non-overlap registration when all six sulci are involved. In contrast, our method can simultaneously register all six landmark curves, and the registration accuracy for all six sulci, exceeds that of FLASH even when FLASH handles these sulci in individual smaller groups. But for a fair comparison of the capacity of controlling geometric distortion, in Figure \ref{fig:move8_temp18} we compare the registration result when all methods are required to register only CS, SFS, IFS, and postCS. Compared to FLASH, we can match the landmark curve well without inducing too much distortion in other regions. As shown in Figure \ref{fig:move8_temp18}, the registration result by FLASH is highly biologically implausible, while ours induce much less distortion in regions away from landmark curves, no matter whether using landmark resampling or not.
\begin{figure}[htb]
    \centering
    \includegraphics[width=\linewidth]{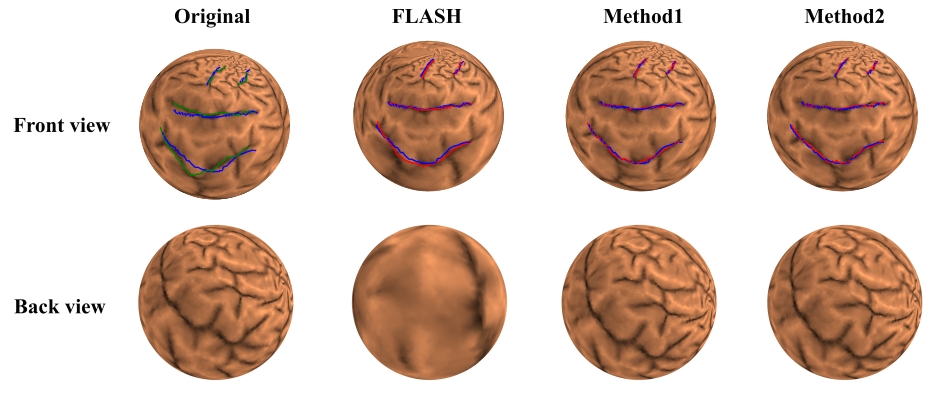}
    \caption{\small \textit{Registration result comparison of brain 8 to brain 18 with 4 sulcal landmarks in two views (for visualization and color conventions, see Figure  \ref{fig:small_deformation_move32_temp35}).}}
    \label{fig:move8_temp18}
\end{figure}
Another advantage of BOOST is that it remains robust in the large deformation case. See Figure \ref{fig:move39_temp35} for an example registering CS, IFS and SFS of brain 39 to those of brain 35. As shown, there are big differences between lengths and positions between two curves in each landmark pair. FLASH \cite{choi2015flash} fails to give a non-overlap deformation while ours still provides a bijective registration result with little distortion. To demonstrate robustness with complex landmark configurations, we performed registration on 11 sulci pairs annotated on two left hemisphere cortical surfaces, including the lateral sulcus, central sulcus, postcentral sulcus, intraparietal sulcus, transverse occipital sulcus, inferior frontal sulcus, precentral sulcus, superior temporal sulcus, inferior temporal sulcus, superior frontal sulcus, and lateral occipital sulcus. While FLASH failed to maintain bijectivity under this dense landmark constraint, both proposed methods successfully produced non-overlapping mappings with accurate landmark correspondence (see Figure \ref{fig: OASIS_11_sulci_landmark_registration_comparison}).
\begin{figure}[!tbp]
    \centering
    \includegraphics[width=\linewidth]{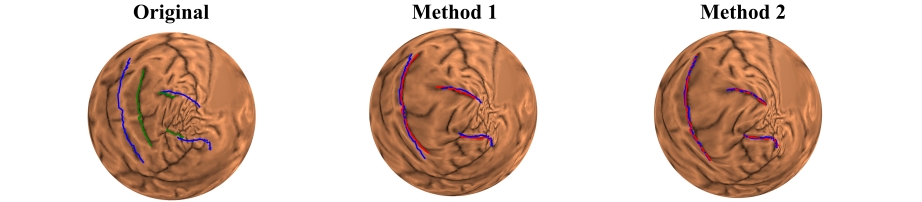}
    \caption{\small \textit{Registration of brain 39 to brain 35 using CS, IFS, SFS as landmark curves (for color conventions, see Figure \ref{fig:small_deformation_move32_temp35}). The first column is visualization on the original moving face. The second and third ones are results using method 1 and 2, respectively.}}
    \label{fig:move39_temp35}
\end{figure}
\begin{figure}
    \centering
    \includegraphics[width=\linewidth]{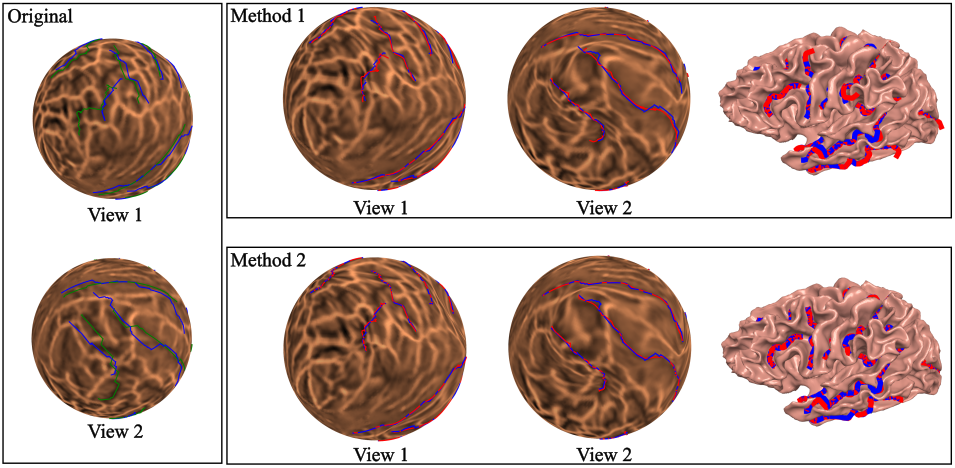}
    \caption{\small \textit{Visualization of registration results by our two methods. Left: Original landmark configuration before registration shown on spherical parameterizations (View 1 and View 2), where blue indicates target landmarks and green indicates moving surface landmarks. Method 1 (top row) and Method 2 (bottom row): Registration results displayed on spherical representations and mapped to the actual cortical surface. In both methods' results, red landmarks represent the deformed moving surface landmarks after registration, while blue landmarks denote the target positions.}}
    \label{fig: OASIS_11_sulci_landmark_registration_comparison}
\end{figure}

Furthermore, our method is seamlessly compatible with  alignment of geometric feature and anatomical parcellation. For illustration, we registered 189 subjects' right-hemi cortical surface in the OASIS3 dataset \cite{lamontagne2019oasis} to population averages from FreeSurfer\cite{fischl2012freesurfer}. More specifically, initial spherical representations for each subject's cortical surface were generated using FreeSurfer. We used the fsaverage6 surface, a standard population-average spherical template provided by FreeSurfer, as the common registration target space. For the parcellation registration, we used FreeSurfer to generate parcellation labels (i.e. 35 cortical areas) for each subject including the fsaverage6 surface. 
Let $\Phi$ be the non-rigid deformation, and $M,F$ be geometric features on the moving sphere and the fixed sphere, respectively. During optimization, the similarity of features between  $M \circ \Phi$ and $F$ is estimated using the Normalized correlation coefficient(NCC), 
\[
d_{corr}(M \circ \Phi, F) = \frac{cov(M \circ \Phi, F)}{\sigma(M \circ \Phi) \sigma(F)} .
\]
For the parcellation alignment, we leveraged the Dice score of anatomical parcellation between the moved and fixed spheres:
\[
\text{Dice}(M^p\circ\Phi,F^p) = \frac{2 \cdot |M^p\circ\Phi \cap F^p|}{|M^p\circ\Phi|+|F^p|},
\]
where $M^p$ and $F^p$ are the parcel $p$ in the moving and fixed spheres. The parcellation loss is 
\[
d_{par}(M \circ \Phi, F) = \frac{1}{P}\sum_{i=1}^{P}(1-\text{Dice}(M^p\circ\Phi,F^p)),
\]
where $P$ denotes the number of parcels, and the parcellation atlas used in registration was generated by FreeSurfer. 

We applied BOOST with $\mathcal{L}_{task} = 10 \cdot d_{par}+ d_{corr}$ for the experiment. The parcellation accuracy was quantified using the Dice score (higher values denote greater overlap). The alignment of geometric features was evaluated by comparing each subject’s sulcal depth to the FreeSurfer fsaverage atlas. Performance was measured by two metrics, NCC and Dice score, with higher NCC values indicating better alignment accuracy in sulcal depth and higher Dice score meaning greater parcellation similarity. The mean and standard deviation of NCC and Dice score are 0.9122 ± 0.0169 and 0.9216 ± 0.0262, respectively. Figure \ref{fig:sulc_parc_reg} shows a more detailed result for 35 regions parceled by the atlas provided by FreeSurfer.  
It is worth mentioning that in SUGAR \cite{ren2024sugar}, Ren et al. conduct the same experiment on the ADRC dataset, which also consists of  brain data from Alzheimer Disease subjects, and their method achieved 0.904 NCC and 0.839 Dice score. This implies that our method is also competitive and potentially superior to other mainstream algorithms.

Lastly, we applied our algorithm to registering sulcal depth and six sulci of two cortical surfaces. We used the left cortical surface of subject 30001 and 30003 in the OASIS3 dataset as the moving surface and target surface, respectively, and matched the six pairs of landmark curves meanwhile maximizing the correlation of their sulcal depth. Figure \ref{fig:sulcal_depth_correlation_comparison} is the distribution of sulcal depth on cortical surfaces, which shows that distribution of sulcal depth on the moving surface is more similar to that on the fixed surface after deformation. Notably, after hybrid matching, the similarity in sulcal depth distribution is improved from a pre-registration correlation coefficient of 0.490 to 0.705 post-registration (as opposed to just 0.166 when performing only landmark matching). We also compare the landmark matching effect of these two methods in Figure \ref{fig: hybrid_register_OASIS_brain_surf_compariso}, and hybrid matching delivers both appropriate landmark registration and sulcal depth alignment, enhancing the accuracy and plausibility of surface correspondence, which is an advantage in quantitative neuroimaging analysis and medical diagnosis. 

\begin{figure}
    \centering
    \includegraphics[width=\linewidth]{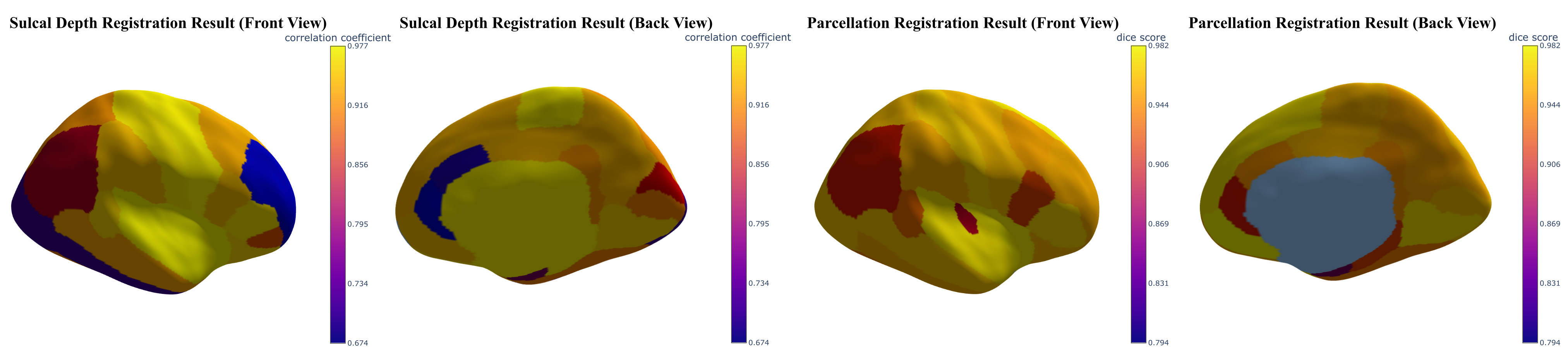}
    \caption{\small \textit{Quantitative Evaluation of Surface Registration Accuracy for Parcellation and Sulcal Depth Data. The left panels show parcellation registration results with Dice scores, indicating regional alignment accuracy across different brain parcels. The right panels present sulci depth registration results with correlation coefficients, measuring the correspondence of sulcal depth patterns between registered surfaces and the averaged template.}}
    \label{fig:sulc_parc_reg}
\end{figure}

\begin{figure}[!tbp]
    \centering
    \includegraphics[width=\linewidth]{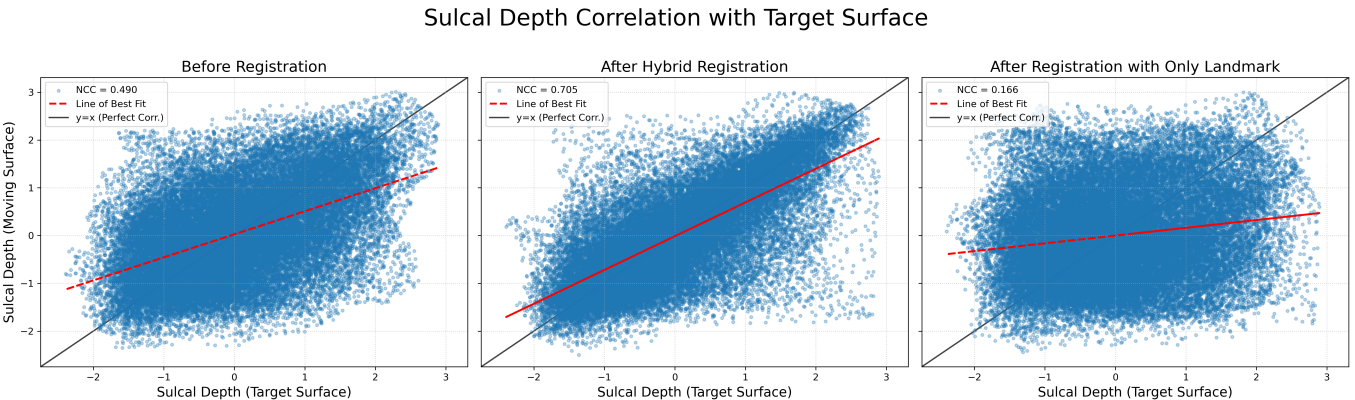}
    \caption{\small \textit{Sulcal depth correlations between the moving surface and the target surface before and after registration. From left to right: (1) before deformation; (2) after hybrid matching registration; (3) after only-landmark matching registration. Alignment is visually and statistically improved after hybrid matching.}}
    \label{fig:sulcal_depth_correlation_comparison}
\end{figure}
\begin{figure}[!tbp]
    \centering
    \includegraphics[width=\linewidth]{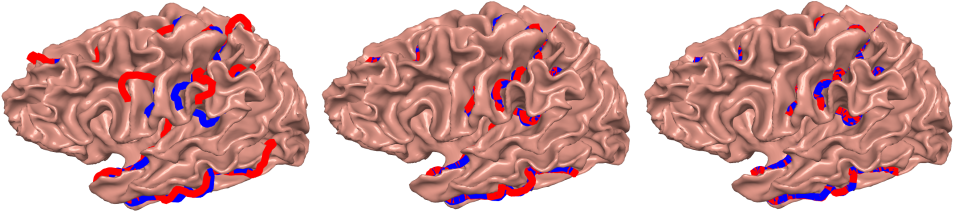}
    \caption{\small \textit{Comparing registration results with sulcal landmarks of the moving surface highlighted in red and the target landmarks highlighted in blue: (left) fixed cortical surface, (center) deformed moving surface after hybrid matching, (right) deformed moving surface using landmark-only matching. Hybrid matching achieves both precise landmark alignment and improved sulcal depth correlation.}}
    \label{fig: hybrid_register_OASIS_brain_surf_compariso}
\end{figure}

\section{Discussion}
\label{Sec: discussion}
In this paper we have introduced the concept of the Spherical Beltrami Differential, which has a one-to-one correspondence with the equivalence class of spherical homeomorphisms up to the group of conformal mappings. Following this, we used the Spherical Beltrami Network (SBN) to establish a neural optimization framework BOOST for spherical parameterization problems, which achieves optimization of spherical Beltrami differentials in a two-chart manner. Across multiple distinct testbeds including cortical surface alignment, extreme distortions, and hybrid intensity‐landmark matching, BOOST matched or beat both classical numerical algorithms. Here, we try to analyze several potential factors that underlie this excellence:
\begin{enumerate}
    \item \textbf{Power of SBN.} The effectiveness of BOOST stems directly from the capacity of the LSQC neural surrogate, the Spectral Beltrami Network (SBN). By closely approximating the LSQC objective, SBN ensures that for an admissible Beltrami coefficient (BC) field, the predicted mapping tracks the true LSQC solution and usually remains bijective, thereby realizing the deformation prescribed by the input BC. This high-quality differentiable approximation gives BOOST a smooth, well-conditioned optimization landscape in the Beltrami space, enabling gradient-based updates and suited for general spherical parameterization problems even some cases that explicit gradient on the deformation mapping is hard to compute.
    \item \textbf{Free-boundary optimization.}  Traditional numerical algorithms usually impose hard boundary constraints, which shrinks the admissible solution space and increases the risk of getting trapped in sub-optimal local minima. SBN inherits the free-boundary perspective of LSQC energy, which provides a broader search space of mappings. Moreover, the discrete choice of pinpoint required in LSQC becomes differentiable in SBN, i.e. BOOST automatically learns what is the most appropriate pinpoint condition, and thus BOOST can be applied in landmark-free mapping problems.
    \item \textbf{Direct control on conformality.} While classical methods often minimize angle distortion indirectly (e.g. Laplacian smoothing) and deep learning approaches can control angle distortion only through penalizing the output mapping, our BOOST, by regularizing the Beltrami differential directly via smoothness and magnitude penalties, achieves fine‐grained angular‐distortion control.
\end{enumerate}
The excellence of SBN lies not just in its use of deep learning, but that it faithfully encodes a mathematically well-behaved energy whose properties—existence, uniqueness, invariance, and resolution-independence, which translate directly into superior performance. By these, SBN achieves what neither classical solvers nor generic neural networks can: produce accurate surface mappings with well-controlled distortion and generalize across resolutions and different applications. Though SBN represents a substantial advance and BOOST outperforms in applications, several pragmatic gaps remain:
\begin{enumerate}
    \item \textbf{Not totally mesh-free.} Currently, to handle ultra-high-resolution meshes, we interpolate the solution from a lower-resolution “standard” grid, which is supported by the resolution-independence property of LSQC. However, this property is built on the assumption that Beltrami coefficients on the three sub-triangles equal to that of the parent triangle before splitting. This condition might be inappropriate in high-curvature or feature-dense regions, and can introduce interpolation artifacts that do not faithfully reflecting fine-mesh geometry.  An ideal remedy is a truly resolution-invariant architecture (e.g. continuous operators or adaptive patch hierarchies) that naturally receives Beltrami data and outputs deformation at arbitrary granularity, and that would be our future working direction.
    \item \textbf{Limited to genus-0 topology.} Our current framework BOOST assumes genus-0 (spherical) topology. Extending to higher-genus or open surfaces is straightforward in principle—by redefining shape of parameterization domain as well as chart overlaps, and this is also left to future work.
\end{enumerate}
\section{Conclusions}
\label{sec: conclusions}
In this work, we introduced the novel concept of the Spherical Beltrami Differential and built a correspondence between Spherical Beltrami Differentials and spherical self-quasiconformal mappings. Following this, we proposed a two-chart optimization framework BOOST tailored to complex genus-0 surface mapping problems. As a key factor in BOOST, SBN can produce free-boundary quasiconformal mappings accurately following the input BCs. By designing three auxiliary losses $\mathcal{L}_{\text{bm}},\mathcal{L}_{\text{bs}} \text{ and }\mathcal{L}_{\text{folding}}$, BOOST resolves gluing issues between two hemispheres, and achieves smooth transitions across hemispherical boundaries. These capabilities establish BOOST as a robust and versatile framework for spherical parameterization problems. The excellence of BOOST has been demonstrated through extensive experiments and applications across three key categories: Landmark matching, Intensity matching and Hybrid matching registration. 

To the best of our knowledge, this is the first work to achieve optimization over surface Beltrami differentials, and the outstanding performance of BOOST implies the potential of exploring optimization over surface Beltrami Differentials on general surfaces.

\newpage 
\bibliographystyle{siamplain}
\bibliography{references}

@article{li2024bi,
  title={A bi-variant variational model for diffeomorphic image registration with relaxed Jacobian determinant constraints},
  author={Li, Yanyan and Chen, Ke and Chen, Chong and Zhang, Jianping},
  journal={Applied Mathematical Modelling},
  volume={130},
  pages={66--93},
  year={2024},
  publisher={Elsevier}
}

@article{chen2024multiscale,
  title={Multiscale approach for variational problem joint diffeomorphic image registration and intensity correction: theory and application},
  author={Chen, Peng and Chen, Ke and Han, Huan and Zhang, Daoping},
  journal={Multiscale Modeling \& Simulation},
  volume={22},
  number={3},
  pages={1097--1135},
  year={2024},
  publisher={SIAM}
}

@article{chen2024three,
  title={Three-Stage Approach for 2D/3D Diffeomorphic Multimodality Image Registration with Textural Control},
  author={Chen, Ke and Han, Huan},
  journal={SIAM Journal on Imaging Sciences},
  volume={17},
  number={3},
  pages={1690--1728},
  year={2024},
  publisher={SIAM}
}

@article{fischl1999high,
  title={High-resolution intersubject averaging and a coordinate system for the cortical surface},
  author={Fischl, Bruce and Sereno, Martin I and Tootell, Roger BH and Dale, Anders M},
  journal={Human brain mapping},
  volume={8},
  number={4},
  pages={272--284},
  year={1999},
  publisher={Wiley Online Library}
}

@article{robinson2014msm,
  title={MSM: a new flexible framework for multimodal surface matching},
  author={Robinson, Emma C and Jbabdi, Saad and Glasser, Matthew F and Andersson, Jesper and Burgess, Gregory C and Harms, Michael P and Smith, Stephen M and Van Essen, David C and Jenkinson, Mark},
  journal={Neuroimage},
  volume={100},
  pages={414--426},
  year={2014},
  publisher={Elsevier}
}

@article{cheng2020cortical,
  title={Cortical surface registration using unsupervised learning},
  author={Cheng, Jieyu and Dalca, Adrian V and Fischl, Bruce and Z{\"o}llei, Lilla and Alzheimer’s Disease Neuroimaging Initiative and others},
  journal={NeuroImage},
  volume={221},
  pages={117161},
  year={2020},
  publisher={Elsevier}
}

@inproceedings{suliman2022deep,
  title={A deep-discrete learning framework for spherical surface registration},
  author={Suliman, Mohamed A and Williams, Logan ZJ and Fawaz, Abdulah and Robinson, Emma C},
  booktitle={International Conference on Medical Image Computing and Computer-Assisted Intervention},
  pages={119--129},
  year={2022},
  organization={Springer}
}

@article{praun2003spherical,
  title={Spherical parametrization and remeshing},
  author={Praun, Emil and Hoppe, Hugues},
  journal={ACM transactions on graphics (TOG)},
  volume={22},
  number={3},
  pages={340--349},
  year={2003},
  publisher={ACM New York, NY, USA}
}

@inproceedings{wan2012efficient,
  title={Efficient spherical parametrization using progressive optimization},
  author={Wan, Shenghua and Ye, Tengfei and Li, Maoqing and Zhang, Hongchao and Li, Xin},
  booktitle={International Conference on Computational Visual Media},
  pages={170--177},
  year={2012},
  organization={Springer}
}

@inproceedings{kazhdan2012can,
  title={Can mean-curvature flow be modified to be non-singular?},
  author={Kazhdan, Michael and Solomon, Jake and Ben-Chen, Mirela},
  booktitle={Computer Graphics Forum},
  volume={31 Issue 5},
  pages={1745--1754},
  year={2012},
  organization={Wiley Online Library}
}

@article{cui2019spherical,
  title={Spherical optimal transportation},
  author={Cui, Li and Qi, Xin and Wen, Chengfeng and Lei, Na and Li, Xinyuan and Zhang, Min and Gu, Xianfeng},
  journal={Computer-Aided Design},
  volume={115},
  pages={181--193},
  year={2019},
  publisher={Elsevier}
}

@article{wang2014rigid,
  title={As-rigid-as-possible spherical parametrization},
  author={Wang, Chunxue and Liu, Zheng and Liu, Ligang},
  journal={Graphical models},
  volume={76},
  number={5},
  pages={457--467},
  year={2014},
  publisher={Elsevier}
}

@article{zhao2021s3reg,
  title={S3Reg: superfast spherical surface registration based on deep learning},
  author={Zhao, Fenqiang and Wu, Zhengwang and Wang, Fan and Lin, Weili and Xia, Shunren and Shen, Dinggang and Wang, Li and Li, Gang},
  journal={IEEE transactions on medical imaging},
  volume={40},
  number={8},
  pages={1964--1976},
  year={2021},
  publisher={IEEE}
}

@article{liao2024convergence,
  title={Convergence of Dirichlet Energy Minimization for Spherical Conformal Parameterizations},
  author={Liao, Wei-Hung and Huang, Tsung-Ming and Lin, Wen-Wei and Yueh, Mei-Heng},
  journal={Journal of Scientific Computing},
  volume={98},
  number={1},
  pages={29},
  year={2024},
  publisher={Springer}
}

@article{liu2026spherical,
  title={Spherical Area-Preserving Parameterization via Energy Minimization},
  author={Liu, Shu-Yung and Yueh, Mei-Heng},
  journal={SIAM Journal on Imaging Sciences},
  volume={19},
  number={1},
  pages={207--235},
  year={2026},
  publisher={SIAM}
}

@article{haker2002conformal,
  title={Conformal surface parameterization for texture mapping},
  author={Haker, Steven and Angenent, Sigurd and Tannenbaum, Allen and Kikinis, Ron and Sapiro, Guillermo and Halle, Michael},
  journal={IEEE Transactions on Visualization and Computer Graphics},
  volume={6},
  number={2},
  pages={181--189},
  year={2002},
  publisher={IEEE}
}

@inproceedings{angenent1999conformal,
  title={Conformal geometry and brain flattening},
  author={Angenent, Sigurd and Haker, Steven and Tannenbaum, Allen and Kikinis, Ron},
  booktitle={International Conference on Medical Image Computing and Computer-Assisted Intervention},
  pages={271--278},
  year={1999},
  organization={Springer}
}

@article{angenent2002laplace,
  title={On the Laplace-Beltrami operator and brain surface flattening},
  author={Angenent, Sigurd and Haker, Steven and Tannenbaum, Allen and Kikinis, Ron},
  journal={IEEE transactions on medical imaging},
  volume={18},
  number={8},
  pages={700--711},
  year={2002},
  publisher={IEEE}
}

@article{yeo2009spherical,
  title={Spherical demons: fast diffeomorphic landmark-free surface registration},
  author={Yeo, BT Thomas and Sabuncu, Mert R and Vercauteren, Tom and Ayache, Nicholas and Fischl, Bruce and Golland, Polina},
  journal={IEEE transactions on medical imaging},
  volume={29},
  number={3},
  pages={650--668},
  year={2009},
  publisher={IEEE}
}

@article{wang2016bijective,
  title={Bijective spherical parametrization with low distortion},
  author={Wang, Chunxue and Hu, Xin and Fu, Xiaoming and Liu, Ligang},
  journal={Computers \& Graphics},
  volume={58},
  pages={161--171},
  year={2016},
  publisher={Elsevier}
}

@article{meng2016tempo,
  title={Tempo: feature-endowed Teichmuller extremal mappings of point clouds},
  author={Meng, Ting Wei and Choi, Gary Pui-Tung and Lui, Lok Ming},
  journal={SIAM Journal on Imaging Sciences},
  volume={9},
  number={4},
  pages={1922--1962},
  year={2016},
  publisher={SIAM}
}

@article{lui2013texture,
  title={Texture map and video compression using Beltrami representation},
  author={Lui, Lok Ming and Lam, Ka Chun and Wong, Tsz Wai and Gu, Xianfeng},
  journal={SIAM Journal on Imaging Sciences},
  volume={6},
  number={4},
  pages={1880--1902},
  year={2013},
  publisher={SIAM}
}

@article{lui2015splitting,
  title={A splitting method for diffeomorphism optimization problem using Beltrami coefficients},
  author={Lui, Lok Ming and Ng, Tsz Ching},
  journal={Journal of Scientific Computing},
  volume={63},
  number={2},
  pages={573--611},
  year={2015},
  publisher={Springer}
}

@article{lam2014landmark,
  title={Landmark-and intensity-based registration with large deformations via quasi-conformal maps},
  author={Lam, Ka Chun and Lui, Lok Ming},
  journal={SIAM Journal on Imaging Sciences},
  volume={7},
  number={4},
  pages={2364--2392},
  year={2014},
  publisher={SIAM}
}

@article{lui2012optimization,
  title={Optimization of surface registrations using Beltrami holomorphic flow},
  author={Lui, Lok Ming and Wong, Tsz Wai and Zeng, Wei and Gu, Xianfeng and Thompson, Paul M and Chan, Tony F and Yau, Shing-Tung},
  journal={Journal of scientific computing},
  volume={50},
  number={3},
  pages={557--585},
  year={2012},
  publisher={Springer}
}

@article{lui2007landmark,
  title={Landmark constrained genus zero surface conformal mapping and its application to brain mapping research},
  author={Lui, Lok Ming and Wang, Yalin and Chan, Tony F and Thompson, Paul},
  journal={Applied Numerical Mathematics},
  volume={57},
  number={5-7},
  pages={847--858},
  year={2007},
  publisher={Elsevier}
}

@article{lui2010optimized,
  title={Optimized conformal surface registration with shape-based landmark matching},
  author={Lui, Lok Ming and Thiruvenkadam, Sheshadri and Wang, Yalin and Thompson, Paul M and Chan, Tony F},
  journal={SIAM Journal on Imaging Sciences},
  volume={3},
  number={1},
  pages={52--78},
  year={2010},
  publisher={SIAM}
}

@article{xu2025neural,
  title={A neural optimization framework for free-boundary diffeomorphic mapping problems and its applications},
  author={Xu, Zhehao and Lui, Lok Ming},
  journal={arXiv preprint arXiv:2511.11679},
  year={2025}
}

@article{ren2024sugar,
  title={Sugar: Spherical ultrafast graph attention framework for cortical surface registration},
  author={Ren, Jianxun and An, Ning and Zhang, Youjia and Wang, Danyang and Sun, Zhenyu and Lin, Cong and Cui, Weigang and Wang, Weiwei and Zhou, Ying and Zhang, Wei and others},
  journal={Medical Image Analysis},
  volume={94},
  pages={103122},
  year={2024},
  publisher={Elsevier}
}

@article{lamontagne2019oasis,
  title={OASIS-3: longitudinal neuroimaging, clinical, and cognitive dataset for normal aging and Alzheimer disease},
  author={LaMontagne, Pamela J and Benzinger, Tammie LS and Morris, John C and Keefe, Sarah and Hornbeck, Russ and Xiong, Chengjie and Grant, Elizabeth and Hassenstab, Jason and Moulder, Krista and Vlassenko, Andrei G and others},
  journal={medrxiv},
  pages={2019--12},
  year={2019},
  publisher={Cold Spring Harbor Laboratory Press}
}

@article{fischl2012freesurfer,
  title={FreeSurfer},
  author={Fischl, Bruce},
  journal={Neuroimage},
  volume={62},
  number={2},
  pages={774--781},
  year={2012},
  publisher={Elsevier}
}

@article{choi2015flash,
  title={FLASH: Fast landmark aligned spherical harmonic parameterization for genus-0 closed brain surfaces},
  author={Choi, Pui Tung and Lam, Ka Chun and Lui, Lok Ming},
  journal={SIAM Journal on Imaging Sciences},
  volume={8},
  number={1},
  pages={67--94},
  year={2015},
  publisher={SIAM}
}

@article{lui2014teichmuller,
  title={Teichmuller mapping (t-map) and its applications to landmark matching registration},
  author={Lui, Lok Ming and Lam, Ka Chun and Yau, Shing-Tung and Gu, Xianfeng},
  journal={SIAM Journal on Imaging Sciences},
  volume={7},
  number={1},
  pages={391--426},
  year={2014},
  publisher={SIAM}
}

@article{choi2016spherical,
  title={Spherical conformal parameterization of genus-0 point clouds for meshing},
  author={Choi, Gary Pui-Tung and Ho, Kin Tat and Lui, Lok Ming},
  journal={SIAM Journal on Imaging Sciences},
  volume={9},
  number={4},
  pages={1582--1618},
  year={2016},
  publisher={SIAM}
}

@article{lyu2024spherical,
  title={Spherical density-equalizing map for genus-0 closed surfaces},
  author={Lyu, Zhiyuan and Lui, Lok Ming and Choi, Gary PT},
  journal={SIAM Journal on Imaging Sciences},
  volume={17},
  number={4},
  pages={2110--2141},
  year={2024},
  publisher={SIAM}
}

@article{qiu2019computing,
  title={Computing quasi-conformal folds},
  author={Qiu, Di and Lam, Ka-Chun and Lui, Lok-Ming},
  journal={SIAM Journal on Imaging Sciences},
  volume={12},
  number={3},
  pages={1392--1424},
  year={2019},
  publisher={SIAM}
}

@article{lai2014folding,
  title={Folding-free global conformal mapping for genus-0 surfaces by harmonic energy minimization},
  author={Lai, Rongjie and Wen, Zaiwen and Yin, Wotao and Gu, Xianfeng and Lui, Lok Ming},
  journal={Journal of Scientific Computing},
  volume={58},
  pages={705--725},
  year={2014},
  publisher={Springer}
}

@article{gu2004genus,
  title={Genus zero surface conformal mapping and its application to brain surface mapping},
  author={Gu, Xianfeng and Wang, Yalin and Chan, Tony F and Thompson, Paul M and Yau, Shing-Tung},
  journal={IEEE transactions on medical imaging},
  volume={23},
  number={8},
  pages={949--958},
  year={2004},
  publisher={IEEE}
}

@inproceedings{zhao2019spherical,
  title={Spherical U-Net on cortical surfaces: methods and applications},
  author={Zhao, Fenqiang and Xia, Shunren and Wu, Zhengwang and Duan, Dingna and Wang, Li and Lin, Weili and Gilmore, John H and Shen, Dinggang and Li, Gang},
  booktitle={Information Processing in Medical Imaging: 26th International Conference, IPMI 2019, Hong Kong, China, June 2--7, 2019, Proceedings 26},
  pages={855--866},
  year={2019},
  organization={Springer}
}

@article{cohen2018spherical,
  title={Spherical cnns},
  author={Cohen, Taco S and Geiger, Mario and K{\"o}hler, Jonas and Welling, Max},
  journal={arXiv preprint arXiv:1801.10130},
  year={2018}
}

@inproceedings{zhao2022fast,
  title={Fast spherical mapping of cortical surface meshes using deep unsupervised learning},
  author={Zhao, Fenqiang and Wu, Zhengwang and Wang, Li and Lin, Weili and Li, Gang},
  booktitle={International Conference on Medical Image Computing and Computer-Assisted Intervention},
  pages={163--173},
  year={2022},
  organization={Springer}
}

@article{jin2008discrete,
  title={Discrete surface Ricci flow},
  author={Jin, Miao and Kim, Junho and Luo, Feng and Gu, Xianfeng},
  journal={IEEE Transactions on Visualization and Computer Graphics},
  volume={14},
  number={5},
  pages={1030--1043},
  year={2008},
  publisher={IEEE}
}

@article{guo2023automatic,
  title={Automatic landmark detection and registration of brain cortical surfaces via quasi-conformal geometry and convolutional neural networks},
  author={Guo, Yuchen and Chen, Qiguang and Choi, Gary PT and Lui, Lok Ming},
  journal={Computers in Biology and Medicine},
  volume={163},
  pages={107185},
  year={2023},
  publisher={Elsevier}
}

\end{document}